\documentclass[12pt]{article}
\usepackage[margin=1.0in]{geometry}
\usepackage{amsmath,amssymb,amsthm}
\usepackage[T1]{fontenc}
\usepackage{lmodern}
\usepackage{microtype, centernot}
\usepackage[authoryear,round]{natbib}
\usepackage{booktabs}
\usepackage{threeparttable}
\usepackage[colorlinks=true,linkcolor=blue,citecolor=blue,urlcolor=blue]{hyperref}

\newcommand{\R}{\mathbb{R}}

\theoremstyle{definition}
\newtheorem{definition}{Definition}
\theoremstyle{plain}
\newtheorem{lemma}{Lemma}
\newtheorem{proposition}{Proposition}
\newtheorem{theorem}{Theorem}
\newtheorem{corollary}{Corollary}
\newtheorem{remark}{Remark}
\begin{document}

\title{A Social Welfare Function Satisfying Anonymity and Almost Weak Pareto%
\footnote{We thank Francesco Ruscitti for several fruitful discussions on the subject of this paper.
We would also like to record our personal appreciation of his contribution: had he not expressly preferred otherwise, we would have been very pleased to have his name appear among the authors rather than in the acknowledgments.}} \author{Ram Sewak Dubey\thanks{Economics Department, Feliciano School of Business, Montclair State University, Montclair, NJ 07043, USA; E-mail: dubeyr@montclair.edu} \and Giorgio Laguzzi\thanks{Universit\`a del Piemonte Orientale, 15121 Alessandria, Italy; Email: giorgio.laguzzi@uniupo.it}} \date{\today }
\maketitle

\begin{abstract}
We study the existence of real-valued social welfare functions (SWFs) on the set of infinite utility streams that satisfy anonymity and almost weak Pareto. 
We characterize the domains of one-period utilities, \(Y\subset \mathbb{R}\), for which such SWFs exist. 
We show that an SWF satisfying these two axioms exists if and only if \(Y\) contains no subset that is order-isomorphic to the set of negative and positive integers. 
Thus, the restrictions on \(Y\) required for the existence of an SWF satisfying anonymity and almost weak Pareto coincide with those required for an SWF satisfying anonymity and weak Pareto. 
Moreover, the SWF we construct is invariant under arbitrary permutations and satisfies stationarity, two properties that are particularly useful for decision making in infinite-horizon economies.

\noindent \emph{Keywords:} \texttt{Anonymity,}\; \texttt{Almost weak Pareto,} 
\texttt{Social welfare function,}\; \texttt{Stationarity.}

\noindent \emph{Journal of Economic Literature} Classification Numbers: \texttt{D60,}\; \texttt{D70,}\; \texttt{D90.}
\end{abstract}

\newpage
\section{Introduction}

One of the toughest challenges facing a decision-maker whose task is to evaluate infinite utility streams is the existence of a trade-off between `equity' and `efficiency'.
Ideally, a decision-maker would like to rely on social preferences (or, better yet, a `utility function') that embody both principles.
For simplicity, let us refer to such preferences (utility functions) as ethical social welfare orders (SWO, henceforth) (social welfare functions, SWF henceforth).
Regarding the principle of `equity' (inter-generational equity, to be more precise), the literature has largely focused on the concept of anonymity (AN, henceforth), in its two main forms: finite anonymity and strong anonymity.
As to the principle of `efficiency', a wide spectrum of Pareto efficiency concepts suitable for infinite sequences of real numbers has been introduced and analyzed, ranging from strong Pareto to weaker forms such as weak Pareto (WP, henceforth).

We refer to the set of utility values which the single individual or generation can attain as the feasible set of one-period utilities, denoted by $Y$.
The literature shows that possibility and impossibility results depend both on the strength of the Pareto axiom and on the structure of the set $Y$.
For relatively strong efficiency requirements, representability is difficult to achieve.
In particular, \citet{dubey2020} showed that there does not exist a SWF satisfying anonymity and upper asymptotic Pareto on any non-trivial domain, while \citet{petri2019} proved the existence of a SWF satisfying anonymity and lower asymptotic Pareto when $Y$ is finite.
More recently, \citet{dubey2021} showed that a SWF satisfying anonymity and asymptotic density-one Pareto (ADOP, henceforth) exists if and only if $Y$ is finite.
At the other end, \citet{dubey2011} proved that there exists a SWF satisfying WP and AN if and only if the feasible set of one-period utilities $Y$ does not contain a subset that is order-isomorphic to the set of integers.
Thus, for WP the crucial feature of $Y$ is its order structure rather than its cardinality alone.

The above results leave a natural gap between ADOP and WP.
Since ADOP is a stronger axiom than WP, an open question of interest that arises and that we tackle is the following: is it possible to strengthen, in an economically meaningful manner, the WP axiom to an efficiency axiom weaker than ADOP, while preserving the existence of a SWF satisfying AN and efficiency?
It would also be useful to have an easy-to-check criterion for deciding possibility results with respect to the problem of aggregating infinite utility streams, preferably in terms of properties of $Y(<)$.
It is worth pointing out that \citet{dubey2021}, in their concluding remarks, raised the question of the representability and construction of social welfare relations satisfying anonymity and an ``almost weak Pareto'' condition, according to which all but finitely many generations must be better off in one utility stream than in another.
The present paper takes up this question.

With these goals in mind, in this paper we formally introduce the Almost Weak Pareto (AWP) axiom and study the existence and construction of social welfare functions satisfying AWP and AN.
The AWP axiom requires that one utility stream be socially preferred to another whenever it is no worse in every period and strictly better in all but finitely many periods.
Thus AWP is weaker than ADOP and stronger than WP.
We prove (see Corollary \ref{C1}) that there exists a SWF satisfying AWP and AN if and only if $Y$ does not contain a subset that is order-isomorphic to the set of integers.
Consequently, strengthening WP to AWP does not require any further restriction on the feasible set of one-period utilities: the domain restrictions on the set $Y$ for the existence of a SWF satisfying anonymity and AWP coincide with those needed for the existence of a SWF satisfying anonymity and WP.
Moreover, our proof of existence is constructive (see Lemma \ref{L9} and the proof of Theorem \ref{T1}), and the SWF we construct is invariant under arbitrary permutations and satisfies stationarity in addition to AWP.
Stationarity is a useful property in the context of decision making in infinite-horizon economies since it guarantees time consistency.

If $Y$ contains a subset that is order-isomorphic to the set of integers, Corollary \ref{C1} implies that no SWF satisfying AN and AWP exists.
However, since the strong Pareto axiom is stronger than AWP, it follows from \citet{svensson1980} that social welfare orders satisfying AN and AWP may still exist even when they do not admit a real-valued representation.
This raises the question whether such social welfare orders can be constructive.
Taking advantage of a result due to \citet{dubey2011a} and of the fact that AWP is stronger than WP, we show (see Corollary \ref{C2}) that if $Y$ contains a subset that is order-isomorphic to the set of integers, then every SWO satisfying AN and AWP is non-constructive.
Consequently, we establish (see Theorem \ref{T2}) that there does not exist any SWF satisfying AN and AWP if and only if every SWO on $Y^{\mathbb{N}}$ satisfying AN and AWP is non-constructive.

The methodology used for proving the existence of the aforementioned SWF also deserves mention.
Our construction is related to the order-separability approach used by \citet{debreu1954} in obtaining a numerical representation of a preference ordering.
More specifically, we construct a countable order-separating family for the relevant profile order and use it to obtain a real-valued representation suitable for the problem under consideration.

The remainder of the paper is organized as follows.
Section \ref{sec:2} gathers some preliminary concepts and notation, and introduces the equity and efficiency axioms imposed on the social welfare orders under examination.
In Section \ref{sec:3}, we state and prove our main results on the representation of social welfare orders satisfying anonymity and almost weak Pareto.
The proofs of Lemmas \ref{L5}, \ref{L6}, and \ref{L9} are collected in Appendix \ref{app}.
Section \ref{sec:4} concludes.

\section{Preliminaries and definitions}

\label{sec:2}

Let $\mathbb{R}$, $\mathbb{N}$, and $\mathbb{Z}$ denote the sets of real numbers, natural numbers, and integers respectively.
Let $\mathbb{N}_{0}=\{0\}\cup\mathbb{N}$.
Let $\mathcal{A}=\mathbb{N}_{0}\cup\{\infty\}$, ordered by $0<1<2<\cdots<\infty$.
For $x,y\in\mathbb{R}^{\mathbb{N}}$, write $x\ge y$ if $x_t\ge y_t$ for every $t\in\mathbb{N}$.
Write $x>y$ if $x\ge y$ and $x\ne y$.

Let $Y\subset\mathbb{R}$ be the set of possible utilities.
Throughout the paper, whenever $Y$ is a subset of $\mathbb{R}$, it is understood to be endowed with the usual $<$ relation inherited from $\mathbb{R}$.
Let $X=Y^{\mathbb{N}}$ be the set of feasible utility streams.
For $x\in X$, write $x=(x_1,x_2,\cdots)$.
For each $t\in\mathbb{N}$, the number $x_t\in Y$ is the utility of generation $t$.

A SWO is a binary relation $\succcurlyeq$ on $X$ which is complete and transitive.
We write $x\sim y$ if both $x\succcurlyeq y$ and $y\succcurlyeq x $ hold; and $x\succ y$ when $x\succcurlyeq y$ and $y\not\succcurlyeq x$ hold.
A social welfare function is a map $W: X\to\mathbb{R}$.
A permutation $\pi:\mathbb{N}\to\mathbb{N}$ is a bijection on $\mathbb{N}$.
A permutation $\pi$ is finite if there is $T\in\mathbb{N}$ such that $\pi(t)=t$ for every $t>T$.
For $x\in X$, write $x_{\pi}$ for the stream defined by $(x_{\pi})_t=x_{\pi(t)}$.

\subsection{Domain Types}

In this subsection, we recall some terminology concerning order types and state a useful domain condition.
A set $S$ is \emph{strictly ordered} by a binary relation $<$ if $<$ is connected, transitive, and irreflexive: for all $s, s^{\prime}\in S$ with $s \neq s^{\prime}$, either $s<s^{\prime}$ or $s^{\prime}<s$ holds; if $s<s^{\prime}$ and $s^{\prime}<s^{\prime\prime}$, then $s<s^{\prime\prime}$; and $s<s$ holds for no $s\in S$.
We denote such a strictly ordered set by $S(<)$.

Two strictly ordered sets $S(<)$ and $S^{\prime}(\prec)$ are \emph{similar} if there exists a one-to-one map $f:S\to S^{\prime}$ onto $S^{\prime}$ such that, for all $s_1,s_2\in S$, $s_1<s_2$ if and only if $f(s_1)\prec f(s_2)$.
Equivalently, the two ordered sets are order-isomorphic.

\begin{definition}
\label{D1}
Let $(Y,<)$ be a strictly ordered set.
A \emph{cut} $[Y_1,Y_2]$ of $Y$ is a partition of $Y$ into two non-empty sets $Y_1$ and $Y_2$ such that
\begin{equation*}
Y_1\cup Y_2=Y,\qquad Y_1\cap Y_2=\varnothing,
\end{equation*}
and, for every $y_1\in Y_1$ and every $y_2\in Y_2$,
\begin{equation*}
y_1<y_2.
\end{equation*}
An element $\underline{y}\in Y$ is called a \emph{first element} of $Y$ if there is no $y\in Y$ such that $y<\underline{y}$.
An element $\overline{y}\in Y$ is called a \emph{last element} of $Y$ if there is no $y\in Y$ such that $\overline{y}<y$.
\end{definition}

We say that $(Y,<)$ is \emph{well ordered} if every non-empty subset of $Y$ has a first element, and \emph{reverse well ordered} if every non-empty subset of $Y$ has a last element.
If $Y_1$ and $Y_2$ are disjoint ordered subsets whose union is $Y$ and every element of $Y_1$ is smaller than every element of $Y_2$, we write $Y=Y_1+Y_2$ for their ordered sum; either component may be empty.

Let $Y$ be a non-empty subset of $\mathbb{R}$.
Let $\mathbb{N}^{-}=\{-1,-2,-3,\cdots\}$, ordered by the usual order.
We use the following order types:

\begin{itemize}
\item $Y(<)$ is of order type $\omega$ if $Y(<)$ is similar to $\mathbb{N}(<)$.

\item $Y(<)$ is of order type $\omega^{\ast}$ if $Y(<)$ is similar to $\mathbb{N}^{-}(<)$.

\item $Y(<)$ is of order type $\omega+\omega^{\ast}$ if there is a cut $[Y_1,Y_2]$ of $Y$ such that $Y_1(<)$ is of order type $\omega$ and $Y_2(<)$ is of order type $\omega^{\ast}$.
\end{itemize}

The domain restriction used in our characterization can be stated directly in terms of the set of integers.
The following elementary observation will be useful.

\begin{lemma}
\label{L1}
\emph{Let $(Y,<)$ be a strictly ordered set.
Every non-empty subset of $Y$ has either a first element or a last element if and only if $Y$ does not contain a subset that is order-isomorphic to $\mathbb{Z}$.}
\end{lemma}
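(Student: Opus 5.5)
We prove the two directions separately. The easy direction is the contrapositive of ``only if'': suppose $Z\subset Y$ is order-isomorphic to $\mathbb{Z}$ via a similarity $f:\mathbb{Z}\to Z$. Then $Z$ is a non-empty subset of $Y$ with neither a first nor a last element. Indeed, for any $z=f(n)\in Z$ we have $f(n-1)<z<f(n+1)$ with both in $Z$. Hence if every non-empty subset of $Y$ has a first or last element, $Y$ contains no copy of $\mathbb{Z}$.

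For the converse, we again argue by contrapositive. Suppose $S\subset Y$ is non-empty and has neither a first nor a last element; we build an order-embedding of $\mathbb{Z}$ into $S$. Pick any $s_0\in S$. Since $S$ has no last element, choose recursively $s_{n+1}\in S$ with $s_n<s_{n+1}$ for $n\ge 0$. Since $S$ has no first element, choose recursively $s_{-(n+1)}\in S$ with $s_{-(n+1)}<s_{-n}$ for $n\ge 0$. These countably many choices use the axiom of dependent choice; when $Y\subset\mathbb{R}$ it can be avoided by taking explicit choices, e.g.\ via a fixed enumeration of rationals between elements, but the general order-theoretic statement only needs dependent choice.

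Define $g:\mathbb{Z}\to S$ by $g(n)=s_n$. By construction $s_n<s_{n+1}$ for every $n\in\mathbb{Z}$, and transitivity then gives $m<n\Rightarrow g(m)<g(n)$. Since the order is strict and connected, $g$ is injective and also reflects the order, so $g$ is a similarity onto its image $\{s_n:n\in\mathbb{Z}\}$. That image is a subset of $Y$ order-isomorphic to $\mathbb{Z}$, as required.

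There is no real obstacle in this argument. The only point needing care is to check the two-sided recursion gives strict monotonicity across $0$, which holds since $s_{-1}<s_0<s_1$, and to note explicitly where dependent choice is used.
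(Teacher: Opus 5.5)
Your argument is correct and is the paper's own proof. A copy of $\mathbb{Z}$ has neither endpoint, and conversely a subset with neither endpoint yields $\cdots<s_{-1}<s_0<s_1<\cdots$ by successive choices; your extra check that $n\mapsto s_n$ is a similarity onto its image is a harmless refinement.

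One correction to your aside: some choice cannot be dispensed with even when $Y\subset\mathbb{R}$. In ZF it is consistent that there is an infinite Dedekind-finite set of reals $D$. Such a $D$ contains no copy of $\mathbb{Z}$, yet it must have a non-empty subset with no first or last element, since otherwise repeatedly deleting the first element (or the last, if there is no first) would define an injection of $\mathbb{N}$ into $D$. The rational-enumeration trick can index gaps between consecutive elements, as in the paper's appendix, but it cannot select elements of $S$.
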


\begin{proof}
Suppose first that $Y$ contains a subset that is order-isomorphic to $\mathbb{Z}$.
Such a subset has neither a first element nor a last element.
Hence it is not true that every non-empty subset of $Y$ has either a first or a last element.

Conversely, suppose that some non-empty subset $A\subseteq Y$ has neither a first element nor a last element.
Choose $a_0\in A$.
Since $A$ has no last element, choose successively $a_1,a_2,\cdots$ in $A$ such that $a_0<a_1<a_2<\cdots$.
Since $A$ has no first element, choose successively $a_{-1},a_{-2},\cdots$ in $A$ such that $\cdots<a_{-2}<a_{-1}<a_0$.
Then the subset $\{a_n:n\in\mathbb{Z}\}$ is order-isomorphic to $\mathbb{Z}$.
\end{proof}

The following result in \citet[Theorem 1]{Steckel1928} will be useful for our characterization theorem.
\begin{lemma}
\label{L2}
\emph{In order that every nonempty subset of an ordered set $(Y, <)$ possess either a first element or a last element, it is necessary and sufficient that $(Y, <)$ can be represented as the ordered sum of two disjoint ordered sets
\begin{equation*}
Y=Y_1+Y_2,
\end{equation*}
where $(Y_1, <)$ is well ordered (possibly empty) and $(Y_2, <)$ is reverse well ordered (possibly empty).}
\end{lemma}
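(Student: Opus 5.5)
We prove the two directions of Lemma \ref{L2} separately. Sufficiency is routine. For necessity, the plan is to build the decomposition explicitly by putting into $Y_1$ exactly those points below which the order is well ordered.

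\emph{Sufficiency.} Suppose $Y=Y_1+Y_2$ with $Y_1$ well ordered and $Y_2$ reverse well ordered. Let $A\subseteq Y$ be non-empty.
If $A\cap Y_1\neq\varnothing$, then $A\cap Y_1$ has a first element $a$. Every element of $A\cap Y_2$ lies above every element of $Y_1$, so $a$ is also the first element of $A$.
Otherwise $A\subseteq Y_2$, and $A$ has a last element because $Y_2$ is reverse well ordered.

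\emph{Necessity.} Assume every non-empty subset of $Y$ has a first or a last element. Define
\begin{equation*}
Y_1=\{y\in Y:\ \{z\in Y: z\le y\}\ \text{is well ordered}\},\qquad Y_2=Y\setminus Y_1 .
\end{equation*}

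First, $Y_1$ is an initial segment. If $y\in Y_1$ and $y'<y$, then $\{z\le y'\}\subseteq\{z\le y\}$, and a subset of a well ordered set is well ordered, so $y'\in Y_1$. Hence every element of $Y_1$ is below every element of $Y_2$, and $Y=Y_1+Y_2$ (either piece may be empty).

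Second, $Y_1$ is well ordered. Let $B\subseteq Y_1$ be non-empty and pick $b\in B$. The set $\{z\in B: z\le b\}$ is non-empty and lies inside the well ordered set $\{z\le b\}$, so it has a first element. That element is also the first element of $B$.

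Third, $Y_2$ is reverse well ordered, and this is the main step. Suppose some non-empty $C\subseteq Y_2$ has no last element. Then $C$ contains a strictly increasing sequence $c_0<c_1<\cdots$, and $c_0$ has a first element. Since $c_0\in Y_2$, the segment $\{z\le c_0\}$ is not well ordered. Because this is a strict linear order, that yields a strictly decreasing sequence $d_0>d_1>\cdots$ with $d_0\le c_0$.

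Now consider
\begin{equation*}
A=\{d_n:n\in\mathbb{N}_0\}\cup\{c_n:n\in\mathbb{N}_0\}.
\end{equation*}
Every $d_n$ is at most $c_0$, which is below every $c_m$ with $m\ge 1$. So any candidate for a last element of $A$ would have to be some $c_n$, and none exists since $c_n<c_{n+1}$. Similarly, any candidate for a first element would have to be some $d_n$, and none exists since $d_{n+1}<d_n$. Thus $A$ has neither a first nor a last element, contradicting the hypothesis.

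Hence $Y_2$ is reverse well ordered, which completes the decomposition. The same contradiction could alternatively be phrased through Lemma \ref{L1}, since $A$ contains a copy of $\mathbb{Z}$.

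The only delicate points are the two sequence constructions. A non-well-ordered linear order contains an infinite descending chain, and a set with no last element contains an infinite ascending chain. Both use dependent choice, exactly as in the proof of Lemma \ref{L1}.
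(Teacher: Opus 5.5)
Your proof is correct. There is no competing argument in the paper to compare it with: the paper does not prove this lemma. It quotes it from \citet[Theorem 1]{Steckel1928} and then invokes it in the proof of Lemma \ref{L9} to split $\widehat Y=\widehat Y_1+\widehat Y_2$.

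Your argument makes that step self-contained, and your choice of $Y_1$ is canonical. It is the largest well ordered initial segment of $Y$. Indeed, if $Y=Y_1'+Y_2'$ with $Y_1'$ well ordered, then for $y\in Y_1'$ the segment $\{z\in Y: z\le y\}$ lies in $Y_1'$ and is therefore well ordered, so $Y_1'\subseteq Y_1$. Your first two steps hold in any linear order, and the hypothesis enters only in the third step, which is where it belongs.

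Two small remarks. First, the clause ``and $c_0$ has a first element'' in the third step is meaningless as written and is never used, so delete it. Second, dependent choice can be avoided in that step. Fix $c\in C$. Since $c\in Y_2$, there is a non-empty $S\subseteq\{z\in Y: z\le c\}$ with no first element. Put $A=S\cup\{w\in C: w\ge c\}$. Each element of $A$ lies strictly above some element of $S$, because $S$ has no first element and every element of $S$ is $\le c$. Each element of $A$ also lies strictly below some element of $C$ exceeding $c$, and that element belongs to $A$, because $C$ has no last element. So $A$ has neither a first nor a last element, and no sequences need to be extracted.
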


\subsection{Equity and Efficiency Axioms}

\begin{definition}[Finite anonymity]
\label{D2}
A social welfare function $W:X\to\mathbb{R}$ satisfies finite anonymity (hereinafter referred to as AN) if for every $x\in X$ and every finite permutation $\pi$ of $\mathbb{N}$,
\begin{equation*}
W(x_{\pi})=W(x).
\end{equation*}
\end{definition}

\begin{definition}[Strong anonymity]
\label{D3}
The SWF $W$ satisfies strong anonymity if, for every $x\in X$ and every permutation $\pi$ of $\mathbb{N}$,
\begin{equation*}
W(x_{\pi})=W(x).
\end{equation*}
\end{definition}

\begin{definition}[Almost Weak Pareto]
\label{D4}
A social welfare function $W:X\to\mathbb{R}$ satisfies Almost Weak Pareto (hereinafter referred to as AWP) if for all $x,y\in X$, whenever $x\ge y$ and there exists $\bar N\in\mathbb{N}$ such that
\begin{equation*}
x_t>y_t \quad\text{for every }t>\bar N,
\end{equation*}
then
\begin{equation*}
W(x)>W(y).
\end{equation*}
\end{definition}

\begin{definition}[Weak Pareto]
\label{D5}
A social welfare function $W:X\to\mathbb{R}$ satisfies Weak Pareto (hereinafter referred to as WP) if for all $x,y\in X$,
\begin{equation*}
x_t>y_t \quad\text{for every }t\in \mathbb{N},
\end{equation*}
then
\begin{equation*}
W(x)>W(y).
\end{equation*}
\end{definition}

Observe that AWP is stronger than WP.
Hence, if there is no SWF satisfying WP then there is no SWF satisfying AWP as well.
When AN, AWP, or WP is imposed directly on a social welfare order $\succcurlyeq$, we use the corresponding preference formulation: finite permutations imply indifference, and the Pareto antecedents imply the strict preference $x\succ y$.

\begin{definition}[Monotonicity]
\label{D6}
A social welfare function $W:X\to\mathbb{R}$ is monotone if, for all $x,y\in X$,
\begin{equation*}
x\ge y \quad\Longrightarrow\quad W(x)\ge W(y).
\end{equation*}
\end{definition}

\begin{definition}[Stationarity]
\label{D7}
A social welfare function $W:X\to\mathbb{R}$ satisfies stationarity if, for all $x,y\in X$ with $x_1=y_1$,
\begin{equation*}
W(x)\ge W(y) \quad\Longleftrightarrow\quad W(x^{\prime})\ge W(y^{\prime}),
\end{equation*}
where
\begin{equation*}
x^{\prime}=(x_2,x_3,\cdots), \quad y^{\prime}=(y_2,y_3,\cdots).
\end{equation*}
\end{definition}

\section{SWF satisfying Anonymity, Almost Weak Pareto, and Stationarity}

\label{sec:3}
In the subsections below, we develop the existence result in three stages.
We first construct a SWF when $Y\subset\mathbb{R}$ is of order type $\omega$, and then extend the construction to an arbitrary well ordered subset of $\mathbb{R}$.
We next construct a SWF when $Y\subset\mathbb{R}$ is of order type $\omega^{\ast}$, and then extend the result to an arbitrary reverse well ordered subset of $\mathbb{R}$.
Finally, we construct a SWF when $Y\subset\mathbb{R}$ is of order type $\omega+\omega^{\ast}$, and then extend the construction, using Lemma \ref{L2}, to an arbitrary subset of $\mathbb{R}$ which can be represented as the ordered sum of a well ordered set and a reverse well ordered set.
In all these cases, the ordering on the subsets of $\mathbb{R}$ is the usual $<$ relation.

\subsection{SWF for $Y$ of order type $\protect\omega$}

We construct a social welfare function that satisfies AN, AWP, and stationarity when the domain $Y\subset\mathbb{R}$ is of order type $\omega$.
To demonstrate the construction cleanly, we take $Y=\mathbb{N}=\{1,2,3,\cdots\}$.
For $x\in X=Y^{\mathbb{N}}$ and $i\in\mathbb{N}$, define the multiplicity of the utility level $i\in \mathbb{N}$ in $x$ by
\begin{equation*}
m_i(x)=\left| \{t\in\mathbb{N}:x_t= i\}\right|\in\mathcal{A},
\end{equation*}
where $\left| S\right|$ denotes the cardinality of the set $S$.
The cutoff counts
\begin{equation}
\label{E1}
C_r(x)=\left| \{t\in\mathbb{N}:x_t< r\}\right|,\quad r\in\mathbb{N},
\end{equation}
are related to these multiplicities by
\begin{equation}
\label{E2}
C_r(x)=m_1(x)+\cdots+m_{r-1}(x),
\end{equation}
with the convention that an empty sum is zero.
Thus the first-difference comparison of the cutoff counts is equivalently described by first comparing the multiplicities of $1$, then the multiplicities of $2$, and so on.
For computing the first difference, we follow the convention that $\infty-n=\infty$.

For each $x\in X$, define its canonical multiplicity profile $p(x)$ as follows.
\begin{equation}
\label{E3}
p(x)=
\begin{cases}
\left(m_1(x),\cdots,m_{j-1}(x),\infty,\infty,\cdots\right), & \text{if}\;
\exists\; \text{a first index}\; j\; \text{such that}\; m_j(x)=\infty, \\
\left(m_1(x),m_2(x),m_3(x),\cdots\right), & \text{if}\; m_i(x)<\infty\;
\text{for every}\; i\in\mathbb{N}.\end{cases}\end{equation}
In the second case of \eqref{E3}, since $x$ is an infinite stream, the finite multiplicities satisfy
\begin{equation*}
\sum_{i=1}^{\infty}m_i(x)=\infty.
\end{equation*}
Let
\begin{equation*}
\mathcal{P}=\{p(x):x\in X\}
\end{equation*}
be the set of all canonical multiplicity profiles.
Define a lexicographic binary relation on $\mathcal{P}$ as follows.
For $p,q\in\mathcal{P}$, say that $p$ is at least as good as $q$, and write $p\succeq q$, if either $p=q$, or, letting
\begin{equation*}
k^{\ast}=\min\{k\in\mathbb{N}:p_k\ne q_k\},\; \text{one has}\;
p_{k^{\ast}}<q_{k^{\ast}},
\end{equation*}
where every finite integer is taken to be $<\infty$.
Thus the profile with fewer occurrences of the lowest utility level is ranked higher; if those multiplicities are equal, the profile with fewer occurrences of the second-lowest utility level is ranked higher; and so on.\footnote{
The lexicographic relation $\succeq$ on $\mathcal{P}$ is complete and transitive, since for any two distinct profiles the set of coordinates at which they differ has a least element, and the comparison is made at that coordinate.}
The following lemma proves the existence of a countable order-separating family of integer sequences.

\begin{lemma}
\label{L3}

\emph{Let
\begin{equation*}
\mathcal{D}=\{(a_1,\cdots,a_{j-1},\infty,\infty,\cdots):j\in\mathbb{N}, \
a_1,\cdots,a_{j-1}\in\mathbb{N}_{0}\}.
\end{equation*}
Then $\mathcal{D}$ is countable and order-separating in the following sense: if $p,q\in\mathcal{P}$ and $p\succ q$, then there exists $d\in\mathcal{D}$ such that
\begin{equation*}
p\succeq d\succ q.
\end{equation*}}
\end{lemma}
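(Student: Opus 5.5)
The proof has two parts: countability, which is immediate, and order-separation, where we build the separating element $d$ by truncating $p$ just after the first coordinate at which $p$ and $q$ differ. We use the same lexicographic rule that defines $\succeq$ on $\mathcal{P}$ to compare any two sequences in $\mathcal{A}^{\mathbb{N}}$, in particular elements of $\mathcal{D}$ with elements of $\mathcal{P}$. Completeness and transitivity carry over by the argument in the footnote: two distinct sequences always have a least coordinate of difference.

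For countability, we write $\mathcal{D}=\bigcup_{j\in\mathbb{N}}\mathcal{D}_j$, where $\mathcal{D}_j$ is the set of sequences whose first $j-1$ entries lie in $\mathbb{N}_0$ and whose remaining entries equal $\infty$. Each $\mathcal{D}_j$ is in bijection with $\mathbb{N}_0^{\,j-1}$, so it is countable. A countable union of countable sets is countable.

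For order-separation, let $p,q\in\mathcal{P}$ with $p\succ q$, and let $k^\ast$ be the least index with $p_{k^\ast}\neq q_{k^\ast}$, so that $p_{k^\ast}<q_{k^\ast}$.
\begin{itemize}
\item First, $p_{k^\ast}$ is finite, since it is strictly below $q_{k^\ast}\in\mathcal{A}$.
\item Next, $p_1,\dots,p_{k^\ast-1}$ are all finite. By the canonical form \eqref{E3}, once a profile has an entry $\infty$, every later entry is $\infty$. So if some $p_i=\infty$ with $i<k^\ast$, then $p_{k^\ast}=\infty$, which contradicts the previous point.
\end{itemize}
We then set
\begin{equation*}
d=(p_1,\dots,p_{k^\ast},\infty,\infty,\cdots).
\end{equation*}
Its first $k^\ast$ entries lie in $\mathbb{N}_0$, so $d\in\mathcal{D}$ with $j=k^\ast+1$.

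We now check that $d\succ q$. The sequences $d$ and $q$ agree with $p$, and hence with each other, on the coordinates $1,\dots,k^\ast-1$. Their first difference is therefore at $k^\ast$, where $d_{k^\ast}=p_{k^\ast}<q_{k^\ast}$. Hence $d\succeq q$, and the inequality is strict because $d\neq q$.

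Finally we check that $p\succeq d$. If $p=d$ there is nothing to prove. Otherwise the first index $k$ at which $p$ and $d$ differ satisfies $k>k^\ast$. There $d_k=\infty$, so $p_k\neq\infty$ forces $p_k$ to be finite, and hence $p_k<d_k$, which gives $p\succ d$. In both cases $p\succeq d\succ q$.

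No step is a real obstacle. The only point that needs care is verifying that the truncation $d$ has finite entries up to $k^\ast$, so that it genuinely belongs to $\mathcal{D}$. This is exactly where the canonical form of $p$, with every entry after the first $\infty$ also equal to $\infty$, is used.
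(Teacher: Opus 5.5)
Your proof is correct, but it takes a different and simpler route than the paper's proof of this lemma. The paper constructs the separating element by raising one coordinate by one. If $q_k\ge p_k+2$ it takes $d=(p_1,\dots,p_{k-1},p_k+1,\infty,\infty,\dots)$. If $q_k=p_k+1$ it takes $d=p$ when $p\in\mathcal{D}$, and otherwise $d=(p_1,\dots,p_{\ell-1},p_\ell+1,\infty,\dots)$ for some $\ell>k$.

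You use a single construction instead: truncate $p$ right after $k^\ast$ and fill with $\infty$. This works uniformly because $\infty$ is the worst value at every coordinate. Truncation can therefore only make $p$ weakly worse, and it leaves the decisive coordinate $k^\ast$ untouched, so the case split disappears. It is exactly the ``worst-tail'' device $d(p_1,\dots,p_k)$ that the paper itself adopts later in Lemma 7 and in the appendix proofs of Lemmas 5 and 9, so your version makes the treatment consistent across the paper.

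The paper's perturbation gives the strict relation $p\succ d$ in two of its three subcases. That strictness is never needed (Lemma 4 uses only $p\succeq d\succ q$), and it is lost anyway when $d=p$. Your argument also states explicitly that $p_1,\dots,p_{k^\ast}$ are finite, which is needed for $d\in\mathcal{D}$ and which the paper's first case uses only tacitly.

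A minor aside: extending $\succeq$ to $\mathcal{A}^{\mathbb{N}}$ is harmless but not actually required. Here $\mathcal{D}\subseteq\mathcal{P}$, because $(a_1,\dots,a_{j-1},\infty,\dots)=p(x)$ for any stream $x$ containing $a_i$ copies of $i$ for each $i<j$ and infinitely many copies of $j$.
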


\begin{proof}
The set $\mathcal{D}$ is countable because each of its elements is determined by a finite sequence of nonnegative integers.
Equivalently, after identifying a finite block $(a_1,\cdots,a_m)$ with $(a_1,\cdots,a_m,\infty,\infty,\cdots)$, one may write
\begin{equation*}
\mathcal{D}=\bigcup_{m=0}^{\infty}\mathbb{N}_{0}^m.
\end{equation*}
Each set $\mathbb{N}_{0}^m$ is countable, and the countable union of countable sets is countable.
This is the standard denumerability fact used, for example, in the theorem of \citet[Chapter III, Section 4, Theorem 1]{sierpinski1965} that the set of all finite sequences of rational numbers is effectively denumerable.
Hence one may fix an enumeration
\begin{equation*}
\mathcal{D}=\{d^1,d^2,d^3,\cdots\}.
\end{equation*}
Now let $p,q\in\mathcal{P}$ satisfy $p\succ q$.
Let
\begin{equation*}
k=\min\{i\in\mathbb{N}:p_i\ne q_i\}.
\end{equation*}
Then $p_i=q_i$ for every $i<k$ and $p_k<q_k$.
There are two cases.

\begin{enumerate}
\item Suppose $q_k\ge p_k+2$.
Define
\begin{equation*}
d=(p_1, \cdots, p_{k-1}, p_k+1,\infty,\infty,\cdots).
\end{equation*}
Then $d\in\mathcal{D}$.
The first difference between $p$ and $d$ occurs at $k $, and $p_k<p_k+1$, so $p\succ d$.
The first difference between $d$ and $q$ also occurs at $k$.
If $q_k=\infty$, then $p_k+1<q_k$ by the convention that every finite integer is smaller than $\infty$; if $q_k$ is finite, then $p_k+1<q_k$ because $q_k\ge p_k+2$.
Hence $d\succ q$.
Therefore $p\succeq d\succ q$.

\item Suppose $q_k=p_k+1$.
There are two possibilities.

\begin{itemize}
\item If $p\in\mathcal{D}$, choose $d=p$.
Then $p\succeq d$, and $d\succ q$ because $p\succ q$.

\item If $p\notin\mathcal{D}$, then, since $p\in\mathcal{P}$, every coordinate of $p$ is finite and
\begin{equation*}
\sum_{i=1}^{\infty}p_i=\infty.
\end{equation*}
Choose any $\ell>k$ and define
\begin{equation*}
d=(p_1,\cdots,p_{\ell-1},p_{\ell}+1,\infty,\infty,\cdots).
\end{equation*}
Then $d\in\mathcal{D}$.
The first difference between $p$ and $d$ occurs at $\ell$, and $p_{\ell}<p_{\ell}+1$, so $p\succ d$.
The first difference between $d$ and $q$ occurs at $k$, and $d_k=p_k<q_k$, so $d\succ q$.
Again $p\succeq d\succ q$.
\end{itemize}
\end{enumerate}

Thus $\mathcal{D}$ is order-separating.
\end{proof}
In the next lemma we establish a real-valued representation of the lexicographic order.

\begin{lemma}
\label{L4}
\emph{There exists a function $V:\mathcal{P}\to\mathbb{R}$ such that, for all $p,q\in\mathcal{P}$,
\begin{equation*}
V(p)\ge V(q) \quad\text{if and only if}\quad p\succeq q.
\end{equation*}}
\end{lemma}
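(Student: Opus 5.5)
We use the standard Debreu-type construction from a countable order-dense subset, with Lemma \ref{L3} supplying the countable separating family $\mathcal{D}=\{d^1,d^2,\cdots\}$. For each $p\in\mathcal{P}$ we define
\begin{equation*}
V(p)=\sum_{n\,:\,p\succeq d^n}2^{-n}-\sum_{n\,:\,d^n\succ p}2^{-n}.
\end{equation*}
Equivalently, $V(p)=2\sum_{n:\,p\succeq d^n}2^{-n}-1$. The series converges absolutely, so $V$ is well defined and takes values in $[-1,1]$. Here $\succeq$ is the lexicographic relation extended in the obvious way to all sequences in $\mathcal{A}^{\mathbb{N}}$, so that elements of $\mathcal{D}$ can be compared with profiles. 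The footnote argument (first differing coordinate) shows that this extension is still complete and transitive on $\mathcal{P}\cup\mathcal{D}$. We will check this explicitly, since $\mathcal{D}$ need not be contained in $\mathcal{P}$.

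We prove monotonicity first. If $p\succeq q$, then by transitivity every $d^n$ with $q\succeq d^n$ also satisfies $p\succeq d^n$. Hence $\{n:q\succeq d^n\}\subseteq\{n:p\succeq d^n\}$, and therefore $V(p)\ge V(q)$. In particular, $p=q$ gives $V(p)=V(q)$.

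Next we show strict separation. If $p\succ q$, Lemma \ref{L3} provides some $d^{n_0}$ with $p\succeq d^{n_0}\succ q$. Then $n_0$ belongs to the index set for $p$ but not to the one for $q$, since $d^{n_0}\succ q$ rules out $q\succeq d^{n_0}$ by asymmetry of the strict part. Combined with the inclusion above, this gives
\begin{equation*}
V(p)-V(q)\ge 2\cdot 2^{-n_0}>0.
\end{equation*}

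Finally, completeness of $\succeq$ on $\mathcal{P}$ yields the converse direction. If $V(p)\ge V(q)$ but not $p\succeq q$, then $q\succ p$, and the previous step gives $V(q)>V(p)$, a contradiction. So $V(p)\ge V(q)$ if and only if $p\succeq q$.

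The only delicate point is bookkeeping. We must confirm that $\succeq$ is a complete, transitive and antisymmetric order on $\mathcal{P}\cup\mathcal{D}$. The comparison happens at the first differing coordinate, with $\infty$ as the top element of $\mathcal{A}$, and this gives a linear order on sequences over the linearly ordered set $\mathcal{A}$. This is routine, so we do not expect a real obstacle beyond what Lemma \ref{L3} already handles.
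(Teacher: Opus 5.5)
Your proposal is correct and follows the paper's proof. The paper's proof uses $V(p)=\sum_{s}2^{-s}\mathbf{1}_{\{p\succeq d^s\}}$, and your $V$ is just its affine rescaling $2V-1$; the three steps are identical: weak monotonicity by transitivity, strictness via the separating element from Lemma \ref{L3}, and the converse via completeness. Your caution about extending $\succeq$ beyond $\mathcal{P}$ is harmless but unnecessary here, since $\mathcal{D}\subseteq\mathcal{P}$: the sequence $(a_1,\dots,a_{j-1},\infty,\infty,\dots)$ is the profile of a stream with $a_i$ copies of $i$ for each $i<j$ and infinitely many copies of $j$.
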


\begin{proof}

Enumerate the countable set $\mathcal{D}$ as
\begin{equation*}
\mathcal{D}=\{d^1,d^2,d^3,\cdots\}.
\end{equation*}
For $p\in\mathcal{P}$, define
\begin{equation*}
V(p)=\sum_{s=1}^{\infty}\frac{1}{2^s}\mathbf{1}_{\{p\succeq d^s\}}.
\end{equation*}
This sum is well-defined and lies in $[0,1]$.

If $p\succeq q$, then for every $d^s\in\mathcal{D}$, the implication
\begin{equation*}
q\succeq d^s \quad\Longrightarrow\quad p\succeq d^s
\end{equation*}
holds.
Hence
\begin{equation*}
\mathbf{1}_{\{p\succeq d^s\}}\ge \mathbf{1}_{\{q\succeq d^s\}}
\end{equation*}
for every $s$, and therefore $V(p)\ge V(q)$.

If $p\succ q$, then, by the preceding lemma, there exists $d^{s^{\ast}}\in\mathcal{D}$ such that
\begin{equation*}
p\succeq d^{s^{\ast}}\succ q.
\end{equation*}
For this index ${s^{\ast}}$,
\begin{equation*}
\mathbf{1}_{\{p\succeq d^{s^{\ast}}\}}=1 \quad\text{and}\quad \mathbf{1}_{\{q\succeq d^{s^{\ast}}\}}=0.
\end{equation*}
Moreover,
\begin{equation*}
\mathbf{1}_{\{q\succeq d^s\}}\le\mathbf{1}_{\{p\succeq d^s\}}
\end{equation*}
for every $s$, with strict inequality at $s^{\ast}$.
Therefore $V(p)>V(q)$.

Thus $p\succ q$ implies $V(p)>V(q)$, while $p=q$ implies $V(p)=V(q)$.
Since the order on $\mathcal{P}$ is complete, $V(p)\ge V(q)$ holds if and only if $p\succeq q$.
\end{proof}

Using the representation established in the preceding lemma, we now construct a social welfare function for the case $Y=\mathbb{N}$.

\begin{proposition}
\label{P1}
\emph{Let $Y=\mathbb{N}$ and $X=Y^{\mathbb{N}}$.
Then there exists a monotone social welfare function $W:X\to\mathbb{R}$ satisfying Anonymity, Almost Weak Pareto, and stationarity.}
\end{proposition}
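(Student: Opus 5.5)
The plan is to define $W(x)=V(p(x))$, with $V$ the representation from Lemma \ref{L4} and $p$ the canonical multiplicity profile from \eqref{E3}, and then verify each property by working with profiles only. Lemma \ref{L4} says $W(x)\ge W(y)$ iff $p(x)\succeq p(y)$. So every axiom reduces to a statement about the lexicographic order on $\mathcal{P}$.

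\emph{Anonymity.} For any permutation $\pi$, finite or not, each multiplicity satisfies $m_i(x_\pi)=m_i(x)$, since $\pi$ is a bijection of the index sets $\{t:x_t=i\}$. Hence $p(x_\pi)=p(x)$ and $W(x_\pi)=W(x)$. This also gives invariance under arbitrary permutations.

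\emph{Monotonicity.} I would use the cutoff counts \eqref{E1}. If $x\ge y$, then $\{t:x_t<r\}\subseteq\{t:y_t<r\}$, so $C_r(x)\le C_r(y)$ for every $r$. I then claim that $C_r(x)\le C_r(y)$ for all $r$ implies $p(x)\succeq p(y)$. Take the first coordinate $k$ where the profiles differ. Below $k$ the multiplicities agree and are finite, because a first infinite coordinate would force all later coordinates to be $\infty$ in both profiles and there would be no difference. By \eqref{E2}, $C_k(x)=C_k(y)$ is then finite, and $C_{k+1}(x)\le C_{k+1}(y)$ gives $p_k(x)\le p_k(y)$, with the convention $\infty-n=\infty$ when $C_{k+1}(y)=\infty$. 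Since the coordinates differ, $p_k(x)<p_k(y)$, so $p(x)\succ p(y)$. This careful handling of infinite counts through the canonical truncation is where I expect the main obstacle.

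\emph{AWP.} Suppose $x\ge y$ and $x_t>y_t$ for all $t>\bar N$. I would show $p(x)\ne p(y)$, which together with monotonicity gives strict preference. Let $r=\min_t y_t$, which exists because $Y=\mathbb{N}$.
\begin{itemize}
\item If $m_r(y)=\infty$, then $p(y)=(0,\dots,0,\infty,\dots)$ with $\infty$ first at coordinate $r$. Among the infinitely many $t$ with $y_t=r$, all but finitely many have $x_t>r$. So $m_i(x)=0$ for $i<r$ and $m_r(x)$ is finite, hence $p(x)\ne p(y)$.
\item If $m_r(y)$ is finite, then the $x$-values at $t\le\bar N$ together with the finitely many $y$-positions equal to $r$ give $m_r(x)\le m_r(y)$. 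Strictness comes from any index $t>\bar N$ with $y_t=r$. If there is no such index, I pass to the first level $i$ at which $y$ has an index $t>\bar N$; such a level exists and the profiles agree before it. At level $i$, $x$ loses that position and gains only positions with $t\le\bar N$ where $x_t=i$ and $y_t<i$.
\end{itemize}
The cleaner route is to compare cutoff counts. For $r$ large enough that $y_t<r$ for some $t>\bar N$, one gets $C_r(x)<C_r(y)$ whenever $C_r(y)$ is finite. One then checks that the first profile difference is strict, handling the infinite case as above. I expect this AWP bookkeeping to need the most care.

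\emph{Stationarity.} If $x_1=y_1=a$, then $p(x)$ is $p(x')$ with $1$ added at coordinate $a$, and similarly for $p(y)$ and $p(y')$. I then check that adding $1$ to the same coordinate preserves the lexicographic comparison: the first differing index is unchanged, and finite values shift equally while $\infty$ stays $\infty$. For the canonical form, a coordinate $\infty$ remains the first $\infty$. Hence $W(x)\ge W(y)$ iff $W(x')\ge W(y')$.
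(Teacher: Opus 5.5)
Your treatment of anonymity, monotonicity and stationarity is correct and is essentially the paper's argument. (The paper deletes the common first coordinate rather than adding one at level $a$, which is the same observation.) Your first AWP bullet, the case $m_r(y)=\infty$, is also fine. The gap is in the rest of AWP, which is the one step that needs a real idea.

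In your second bullet, the claim that the profiles agree below the first level $i$ at which $y$ has an index $t>\bar N$ is false. For $\bar N=1$, $y=(1,3,3,\dots)$, $x=(2,4,4,\dots)$ one gets $i=3$, yet $p(y)=(1,0,\infty,\dots)$ and $p(x)=(0,1,0,\infty,\dots)$. That sub-case is harmless, since the profiles already differ. In the sub-case where they do agree below $i$, however, you only describe the possible gains at level $i$ (positions $t\le\bar N$ with $y_t<i=x_t$). You never show that they cannot offset the lost tail position, nor that $m_i(x)$ is finite when $m_i(y)=\infty$.

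Your ``cleaner route'' is wrong as stated. Take $\bar N=1$, $y=(1,1,10,10,\dots)$, $x=(1,2,11,11,\dots)$ and $r=5$. There is a tail index with $y_t<r$ and $C_5(y)=2$ is finite, but $C_5(x)=2$ as well. Taking $r$ large lets tail positions of $x$ also fall below the cutoff and cancel the strict loss. Moreover, the finiteness that matters is that of $C_r(x)$, not of $C_r(y)$.

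The missing idea is to choose the cutoff so that the $x$-count is forced to be finite. The paper takes $r_*=\min_{t>\bar N}x_t$, which exists because $\mathbb{N}$ is well ordered. Then $\{s:x_s<r_*\}\subseteq\{1,\dots,\bar N\}$, while any $t_*>\bar N$ with $x_{t_*}=r_*$ has $y_{t_*}<r_*$, so $C_{r_*}(x)<C_{r_*}(y)$. The cutoff $1+\min_{t>\bar N}y_t$ works equally well.

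Next, let $r_0$ be the least cutoff with $C_{r_0}(x)<C_{r_0}(y)$. Then $C_{r_0-1}(x)=C_{r_0-1}(y)$ is finite, and \eqref{E2} gives $m_{r_0-1}(x)<m_{r_0-1}(y)$ at the first coordinate where the profiles differ.

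The same device repairs your bullet. If the profiles agree below $i$, then $C_i(x)=C_i(y)\le\bar N$, so the inclusion $\{t:x_t<i\}\subseteq\{t:y_t<i\}$ is an equality. Hence $\{t:x_t=i\}\subseteq\{t:y_t=i\}$, with your tail index missing from the left-hand set, and every element of the left-hand set lies in $\{1,\dots,\bar N\}$. This gives both strictness and finiteness of $m_i(x)$.
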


\begin{proof}
Let $V:\mathcal{P}\to\mathbb{R}$ be the representation constructed in the preceding lemma.
Define
\begin{equation}
\label{E4}
W(x)=V(p(x)) \quad\text{for every }x\in X.
\end{equation}
We verify the asserted properties.

\begin{itemize}
\item \textbf{Anonymity (AN).}
A finite permutation preserves the multiplicity of every utility level.
Thus, for every finite permutation $\pi$ of $\mathbb{N}$ and every $i\in\mathbb{N}$,
\begin{equation*}
m_i(x_{\pi})=m_i(x).
\end{equation*}
Therefore $p(x_{\pi})=p(x)$ and, by \eqref{E4},
\begin{equation*}
W(x_{\pi})=W(x).
\end{equation*}
So $W$ satisfies finite anonymity.
Observe that even in the case of arbitrary permutations, the multiplicity of every utility level is preserved.
Hence $W$ satisfies the strong anonymity axiom as well.

\item \textbf{Stationarity.}
Take $x,y\in X$ with $x_1=y_1$.
Let
\begin{equation*}
x^{\prime}=(x_2,x_3,\cdots), \quad y^{\prime}=(y_2,y_3,\cdots).
\end{equation*}
Suppose $x_1=y_1=r$ for some $r\in\mathbb{N}$.
Deleting the common first coordinate changes only the multiplicity of the level $r$.
For every $i\ne r$, one has
\begin{equation*}
m_i(x^{\prime})=m_i(x), \quad m_i(y^{\prime})=m_i(y).
\end{equation*}
At level $r$, deleting the common first coordinate replaces each finite positive value $a$ by $a-1$ and leaves $\infty$ equal to $\infty$.
This operation preserves equality and strict inequality on
\begin{equation*}
\{0,1,2,\cdots\}\cup\{\infty\}.
\end{equation*}
Moreover, by the definition of the canonical profile in \eqref{E3}, the first level at which an infinite multiplicity appears is unchanged by deleting a common coordinate.
Hence the canonical profiles are compared in the same way before and after deletion:
\begin{equation*}
p(x)\succeq p(y) \quad\Longleftrightarrow\quad p(x^{\prime})\succeq
p(y^{\prime}).
\end{equation*}
By Lemma \ref{L4} and \eqref{E4}, therefore,
\begin{equation*}
W(x)\ge W(y) \quad\Longleftrightarrow\quad W(x^{\prime})\ge W(y^{\prime}).
\end{equation*}
Thus $W$ satisfies stationarity.

\item \textbf{Monotonicity.}
Let $x\ge y$, i.e., $x_t\ge y_t$ for every $t\in\mathbb{N}$.
Hence, for every cutoff $r\in\mathbb{N}$,
\begin{equation*}
\{t:x_t< r\}\subseteq \{t:y_t< r\}.
\end{equation*}
Consequently,
\begin{equation}
\label{E5}
C_r(x)\le C_r(y) \quad\text{for every }r\in\mathbb{N}.
\end{equation}
If $p(x)=p(y)$, then $W(x)=W(y)$.
Otherwise let $k$ be the first index such that $p_k(x)\neq p_k(y)$.
Since all earlier multiplicities agree, the first cutoff-count difference occurs at $r=k+1$, and inequality \eqref{E5} implies
\begin{equation*}
p_k(x)<p_k(y).
\end{equation*}
Observe that the strict inequality holds for the reasons that

\begin{itemize}
\item if $p_i(x)=p_i(y)=\infty$ for some $i<k$, both profiles are truncated to $\infty$ from $i$ onward, contradicting a later first difference at $k$.
So in fact $p_i(x)=p_i(y)=m_i(x)=m_i(y)<\infty$ i.e., finite for $i<k$,

\item for $C_{k+1}$ to be the first strict-cutoff index, $C_{k+1}(x)$ must be finite, which forces $p_k(x)=m_k(x)$ to be finite.
Two cases arise: (i) Either $p_k(y)$ is finite - then $p_k(x)<p_k(y)$, or, (ii) $p_k(y) =\infty$ - then $p_k(x)<\infty=p_k(y)$.
\end{itemize}

Thus $p(x)\succ p(y)$ and hence $W(x)>W(y)$.
Therefore $W(x)\ge W(y)$, and $W $ is monotone.

\item \textbf{Almost Weak Pareto (AWP).}
Suppose $x_t\ge y_t$ for every $t\in\mathbb{N}$ and that there exists a finite integer $\bar N\in\mathbb{N}$ such that
\begin{equation*}
x_t>y_t\quad\text{for all }t>\bar N.
\end{equation*}
Fix one such $\bar N$.
As in the argument leading to \eqref{E5}, the same inequality holds for every cutoff $r\in\mathbb{N}$.
We next show that the inequality is strict for at least one cutoff.
Since $Y=\mathbb{N}$, the nonempty set
\begin{equation*}
\{x_t:t>\bar N\}
\end{equation*}
has a least element.
Let
\begin{equation*}
r_{\ast} = \min_{t>\bar N} x_t.
\end{equation*}
Because $x_t>y_t\ge 1$ for every $t>\bar N$, we have $r_{\ast}\ge 2$.
Choose $t_{\ast}>\bar N$ such that $x_{t_{\ast}}=r_{\ast}$.
Then
\begin{equation*}
y_{t_{\ast}}<r_{\ast} \quad\text{and}\quad x_{t_{\ast}}\not<r_{\ast}.
\end{equation*}
Moreover, for every $s>\bar N$, $x_s\ge r_{\ast}$.
Hence
\begin{equation*}
\{s:x_s<r_{\ast}\}\subseteq \{1,\cdots,\bar N\},
\end{equation*}
so this set is finite.
Since $t_{\ast}\in\{s:y_s<r_{\ast}\}$ but $t_{\ast}\notin\{s:x_s<r_{\ast}\}$, the definition in \eqref{E1} gives
\begin{equation*}
C_{r_{\ast}}(x)<C_{r_{\ast}}(y).
\end{equation*}
Thus the set
\begin{equation*}
\{r\in\mathbb{N}:C_r(x)<C_r(y)\}
\end{equation*}
is nonempty.
Let
\begin{equation*}
r_0=\min\{r\in\mathbb{N}:C_r(x)<C_r(y)\}.
\end{equation*}
Since $C_1(x)=C_1(y)=0$, we have $r_0\ge 2$.
Set $k=r_0-1$.
For every $r<r_0$, $C_r(x)=C_r(y)$, so in particular
\begin{equation*}
C_k(x)=C_k(y).
\end{equation*}
Also, because $C_{k+1}(x)<C_{k+1}(y)$, $C_{k+1}(x)$ is finite, and hence $C_k(x)=C_k(y)$ is finite.
Therefore, using \eqref{E2},
\begin{equation*}
m_k(x) = C_{k+1}(x)-C_k(x) < C_{k+1}(y)-C_k(y) = m_k(y).
\end{equation*}
Observe that $C_{k+1}(y)$ could be either finite or infinite, whereas $C_k(y) $ is finite.
In case $C_{k+1}(y)$ is finite, the strict inequality holds directly and in case $C_{k+1}(y)$ is infinite, the strict inequality holds because $\infty-n=\infty$ for any finite $n$.
Hence $p_k(x)<p_k(y)$, and $k$ is the first coordinate at which $p(x)$ and $p(y)$ differ.
Therefore $p(x)\succ p(y)$.
By Lemma \ref{L4}, we get
\begin{equation*}
W(x)=V(p(x))>V(p(y))=W(y).
\end{equation*}
Thus $W$ satisfies AWP.
\end{itemize}
\end{proof}

The following lemma extends the possibility result from order type $\omega$ to an arbitrary non-empty well ordered subset of $\mathbb{R}$, ordered by the usual $<$ relation.

\begin{lemma}
\label{L5}
\emph{Let $Y\subset \R$ be an arbitrary non-empty well ordered set with respect to the usual $<$ relation, and let $X=Y^{\mathbb{N}}$.
Then there exists a monotone social welfare function $W:X\to\mathbb{R}$ satisfying Anonymity, Almost Weak Pareto, and stationarity.}
\end{lemma}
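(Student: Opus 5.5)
The plan is to repeat the construction of Proposition \ref{P1} with the well ordered set $Y$ in place of $\mathbb{N}$. There is one essential change: the truncation rule in the canonical profile must be strengthened so that a countable order-separating family still exists.

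\emph{Step 1: $Y$ is countable and indexed by an ordinal.} Every element of $Y$ that is not the last one has an immediate successor in $Y$. Choosing a rational number strictly between each element and its successor gives an injection of $Y$ into $\mathbb{Q}$, so $Y$ is countable. Hence $Y(<)$ is similar to a countable ordinal $\alpha$, and we write $Y=\{y_\beta:\beta<\alpha\}$ with $y_\beta<y_{\beta'}$ iff $\beta<\beta'$. As before, set $m_\beta(x)=|\{t:x_t=y_\beta\}|$ and, for $\beta<\alpha$, define the cutoff count $C_\beta(x)=|\{t:x_t<y_\beta\}|$.

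\emph{Step 2: the new canonical profile.} Let $\gamma(x)$ be the least $\beta<\alpha$ with $C_{\beta+1}(x)=\infty$ or $C_\beta(x)=\infty$; if no such $\beta$ exists, set $\gamma(x)=\alpha$. Define $p(x)_\beta=m_\beta(x)$ for $\beta<\gamma(x)$ and $p(x)_\beta=\infty$ for $\gamma(x)\le\beta<\alpha$.

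Truncating at the first infinite \emph{cumulative} count, rather than at the first infinite multiplicity, is necessary. With the naive truncation of \eqref{E3}, already for $\alpha=\omega\cdot 2$ the profiles would contain a copy of a lexicographic product of two uncountable chains, which is not real-representable.

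With the new truncation, every proper initial segment of the finite part of $p(x)$ has finite total count. Consequently the restriction of $p(x)$ to $\{\beta:\beta<\delta\}$ has finite support for each $\delta<\gamma(x)$. The relation $\succeq$ is defined lexicographically exactly as before; it is complete and transitive because $\alpha$ is well ordered, so the set of coordinates where two profiles differ has a least element.

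\emph{Step 3: separating family and representation (the main obstacle).} Let $\mathcal{D}$ be the set of functions on $\alpha$ that have finite support on some initial segment $\{\beta:\beta\le\delta\}$ and equal $\infty$ from $\delta+1$ on. This set is countable because $\alpha$ is countable.

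Given $p\succ q$ with first difference at $k$, we have $p_k<q_k$, so $p_k$ is finite and $p|_{<k}$ has finite support. If $q_k\ge p_k+2$, take $d=(p|_{<k},\,p_k+1,\,\infty,\ldots)$. If $q_k=p_k+1$ and $p\notin\mathcal{D}$, then $p$ has finite values beyond $k$. Pick $\ell$ with $k<\ell<\gamma(p)$; this is possible because either $\gamma(p)$ is a limit ordinal, or $p$ has finitely many finite coordinates beyond $k$ and then $p$ is essentially already in $\mathcal{D}$. Set $d=(p|_{<\ell},\,p_\ell+1,\,\infty,\ldots)$. In each case $p\succeq d\succ q$.

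Checking these cases carefully is where the real work lies. In particular one must confirm that the possibility $\gamma(p)$ limit with infinite support is handled, and that $d$ really lies in $\mathcal{D}$. Once separation holds, the proof of Lemma \ref{L4} applies verbatim to give $V$, and we set $W=V\circ p$.

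\emph{Step 4: axioms.} Anonymity and strong anonymity hold because each $m_\beta$, and hence $\gamma(x)$, is permutation invariant. Stationarity holds because deleting a common first coordinate lowers one finite multiplicity by one (or leaves $\infty$ unchanged) and does not move $\gamma$, since finiteness of cumulative counts is unaffected.

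For monotonicity and AWP, note first that $x\ge y$ gives $C_\beta(x)\le C_\beta(y)$ for all $\beta$. For AWP, let $r_\ast=\min_{t>\bar N}x_t$, which exists by well-ordering. Then $C_{r_\ast}(x)$ is finite and $C_{r_\ast}(x)<C_{r_\ast}(y)$. Let $\beta_0$ be the least index with $C_{\beta_0}(x)<C_{\beta_0}(y)$.

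This $\beta_0$ cannot be a limit ordinal. If it were, then $\{t:x_t<y_{\beta_0}\}$ would be the increasing union of the sets $\{t:x_t<y_\beta\}$, $\beta<\beta_0$, and the corresponding sets for $y$ have the same finite sizes at each stage, which forces $C_{\beta_0}(x)=C_{\beta_0}(y)$. Hence $\beta_0=k+1$, and as in Proposition \ref{P1} we get $m_k(x)<m_k(y)$ with all earlier coordinates equal and finite. This yields $p(x)\succ p(y)$, with care needed in the case where $k=\gamma(y)$, because then $p_k(y)=\infty$. Therefore $W(x)>W(y)$. The same argument gives monotonicity.
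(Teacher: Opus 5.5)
Your proof is correct, but it differs from the paper's at the central step. The paper uses the same rational-in-each-gap countability argument, and after that it never uses the ordinal type of $Y$. It lists the elements of $Y$ in an \emph{arbitrary} $\omega$-sequence $z_1,z_2,\dots$ and compares the cutoff counts $|\{t:x_t<z_i\}|$ lexicographically in that listing order, smaller being better. Every cutoff count moves in the same direction when $x\ge y$, so the order of the tests is irrelevant for monotonicity and AWP. For AWP one strict test, at $z^\ast=\min_{t>\bar N}x_t$, is enough. Profiles live in $\mathcal{A}^{\mathbb{N}}$, so the separating family is simply ``finite prefix followed by $\infty$''. No multiplicities, truncation rule or limit-ordinal analysis are needed, and the same device carries over verbatim to Lemma~\ref{L9}, where lower and upper tests are interleaved.

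Your construction instead keeps the ``worst level first'' criterion of Proposition~\ref{P1} along the true order of $Y$. Your truncated multiplicity profile is order-equivalent to comparing $(C_\beta(x))_{\beta<\alpha}$ lexicographically along $\alpha$ itself. The reason is that the first difference of cutoff counts always occurs at a successor $k+1$ with $C_k$ finite and equal, and this is why the axioms go through. You are also right that the naive truncation of \eqref{E3} would break representability once $\alpha\ge\omega\cdot 2$.

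The cost is ordinal bookkeeping, and one step in Step 3 should be stated more precisely. Membership in $\mathcal{D}$ depends on $p$ having finite \emph{support} below $\gamma(p)$, not on $p$ having finitely many finite coordinates. For example, with $\gamma(p)=\omega+1$, $p$ can have infinitely many zero coordinates and still lie in $\mathcal{D}$. The clean statement is that $p\notin\mathcal{D}$ forces $\gamma(p)$ to be a limit ordinal below which $p$ has infinitely many nonzero entries; then $\ell=k+1$ works. For monotonicity, add the remark that equal cutoff-count vectors yield equal profiles.
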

\begin{proof}
See Appendix \ref{app}.
\end{proof}

\subsection{SWF for $Y$ of order type $\protect\omega^{\ast}$}

The preceding construction also yields the reverse well ordered case.
The simplest example is
\begin{equation*}
Y^- = \{-1,-2,-3,\cdots\},
\end{equation*}
with the usual order inherited from $\mathbb{R}$.
More generally, suppose that $Y\subset\mathbb{R}$ has order type $\omega^{\ast}$.
Under the order type $\omega^{\ast}$ assumption, there exists an order-reversing bijection
\begin{equation*}
\tau:Y\to\mathbb{N}
\end{equation*}
such that, for all $a,b\in Y$,
\begin{equation}
\label{E6}
a\ge b \quad\Longleftrightarrow\quad \tau(a)\le \tau(b).
\end{equation}
For $Y^- = \{-1,-2,-3,\cdots\}$ one may take simply
\begin{equation*}
\tau(a)=-a.
\end{equation*}
We extend $\tau$ coordinate-wise to infinite utility streams $x$ by writing
\begin{equation*}
\tau(x)=(\tau(x_1),\tau(x_2),\cdots).
\end{equation*}

In the next proposition, we use the order-reversing transformation above to obtain the corresponding social welfare function when $Y$ is of order type $\omega^{\ast}$.

\begin{proposition}
\label{P2}

\emph{Let $Y\subset\mathbb{R}$ have order type $\omega^{\ast}$ and let $X=Y^{\mathbb{N}}$.
Then there exists a monotone social welfare function $W_Y:X\to\mathbb{R}$ satisfying finite anonymity, Almost Weak Pareto, and stationarity.
In particular, such a function exists for
\begin{equation*}
Y^- = \{-1,-2,-3,\cdots\}.
\end{equation*}}
\end{proposition}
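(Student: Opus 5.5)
The plan is to transport the social welfare function of Proposition \ref{P1} to $X=Y^{\mathbb{N}}$ through the order-reversing bijection $\tau:Y\to\mathbb{N}$ of \eqref{E6}, and then flip the sign so that the order reversal is undone. Let $W:\mathbb{N}^{\mathbb{N}}\to\mathbb{R}$ be the monotone SWF of Proposition \ref{P1}, which satisfies AN (indeed strong anonymity), AWP and stationarity. I will define
\begin{equation*}
W_Y(x)=-W(\tau(x)),\qquad x\in X,
\end{equation*}
where $\tau(x)=(\tau(x_1),\tau(x_2),\cdots)\in\mathbb{N}^{\mathbb{N}}$. Since $\tau$ is a bijection of $Y$ onto $\mathbb{N}$, the coordinatewise map $x\mapsto\tau(x)$ is a bijection of $X$ onto $\mathbb{N}^{\mathbb{N}}$, so $W_Y$ is well defined and real-valued. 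For $Y^-$ this is simply $W_Y(x)=-W(-x)$.

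Next I verify each axiom by transferring it through $\tau$.

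\emph{Anonymity.} Because $\tau$ acts coordinatewise, $\tau(x_\pi)=(\tau(x))_\pi$ for every permutation $\pi$, finite or not. Anonymity of $W$ then gives $W_Y(x_\pi)=W_Y(x)$, so $W_Y$ inherits strong anonymity.

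\emph{Monotonicity and AWP.} By \eqref{E6}, $x\ge y$ implies $\tau(y)\ge\tau(x)$ coordinatewise, and $x_t>y_t$ implies $\tau(y_t)>\tau(x_t)$.
\begin{itemize}
\item For monotonicity: if $x\ge y$, then $\tau(y)\ge\tau(x)$. Monotonicity of $W$ gives $W(\tau(y))\ge W(\tau(x))$, that is, $W_Y(x)\ge W_Y(y)$.
\item For AWP: if $x\ge y$ and $x_t>y_t$ for all $t>\bar N$, then the pair $(\tau(y),\tau(x))$ satisfies the AWP hypothesis in $\mathbb{N}^{\mathbb{N}}$. Hence $W(\tau(y))>W(\tau(x))$, which is exactly $W_Y(x)>W_Y(y)$.
\end{itemize}

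\emph{Stationarity.} Suppose $x_1=y_1$. Then $\tau(x)_1=\tau(y)_1$, and the tails satisfy $\tau(x')=\tau(x)'$ and $\tau(y')=\tau(y)'$. Therefore
\begin{equation*}
W_Y(x)\ge W_Y(y)\iff W(\tau(y))\ge W(\tau(x))\iff W(\tau(y)')\ge W(\tau(x)')\iff W_Y(x')\ge W_Y(y'),
\end{equation*}
where the middle equivalence is stationarity of $W$ applied to the pair $(\tau(y),\tau(x))$, which share a first coordinate.

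The only point that needs care is the direction of the inequalities. The sign flip is essential: without it, AWP would be reversed, because $\tau$ turns Pareto improvements in $Y$ into Pareto deteriorations in $\mathbb{N}$. Once the sign is included, every step is a direct transfer of a property already established in Proposition \ref{P1}, so I do not expect any genuine obstacle. Finally, the special case $Y^-$ follows by taking $\tau(a)=-a$.
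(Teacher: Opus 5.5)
Your proposal is correct and is essentially the paper's own proof. The paper also defines $W_Y(x)=-W_{\mathbb{N}}(\tau(x))$ and transfers anonymity, stationarity, monotonicity and AWP from Proposition \ref{P1} through the coordinatewise order reversal, using the sign flip to restore the direction of the inequalities. Your extra claim that strong anonymity is inherited is also valid, since the function of Proposition \ref{P1} is invariant under arbitrary permutations.
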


\begin{proof}
Let $W_{\mathbb{N}}:\mathbb{N}^{\mathbb{N}}\to\mathbb{R}$ be the monotone social welfare function constructed in Proposition \ref{P1}.
Let $\tau:Y\to\mathbb{N}$ be an order-reversing bijection satisfying \eqref{E6}, and extend $\tau$ coordinatewise to $X=Y^{\mathbb{N}}$.
Define
\begin{equation}
\label{E7}
W_Y(x)=-W_{\mathbb{N}}(\tau(x)) \quad\text{for every }x\in X.
\end{equation}
This definition is well-defined because $\tau(x)\in\mathbb{N}^{\mathbb{N}}$ for every $x\in Y^{\mathbb{N}}$.

We first verify finite anonymity.
Let $\pi$ be a finite permutation of $\mathbb{N}$.
Since $\tau$ is applied coordinate-wise,
\begin{equation*}
\tau(x_{\pi})=\tau(x)_{\pi}.
\end{equation*}
By finite anonymity of $W_{\mathbb{N}}$,
\begin{equation*}
W_{\mathbb{N}}(\tau(x_{\pi}))=W_{\mathbb{N}}(\tau(x)_{\pi})=W_{\mathbb{N}}(\tau(x)).
\end{equation*}
Therefore, by \eqref{E7},
\begin{equation*}
W_Y(x_{\pi})=-W_{\mathbb{N}}(\tau(x_{\pi}))=-W_{\mathbb{N}}(\tau(x))=W_Y(x).
\end{equation*}
Thus $W_Y$ satisfies finite anonymity.

We next verify stationarity.
Take $x,y\in X$ with $x_1=y_1$.
Then
\begin{equation*}
\tau(x)_1=\tau(y)_1.
\end{equation*}
Let
\begin{equation*}
x^{\prime}=(x_2,x_3,\cdots), \quad y^{\prime}=(y_2,y_3,\cdots).
\end{equation*}
Since $\tau$ is applied coordinate-wise,
\begin{equation*}
\tau(x^{\prime})=(\tau(x))^{\prime}, \quad
\tau(y^{\prime})=(\tau(y))^{\prime}.
\end{equation*}
By stationarity of $W_{\mathbb{N}}$, applied to $\tau(x)$ and $\tau(y)$, one has
\begin{equation*}
W_{\mathbb{N}}(\tau(x))\le W_{\mathbb{N}}(\tau(y))
\quad\Longleftrightarrow\quad W_{\mathbb{N}}(\tau(x^{\prime}))\le W_{\mathbb{N}}(\tau(y^{\prime})).
\end{equation*}
Using \eqref{E7}, multiplying by $-1$ reverses the inequalities, and therefore
\begin{equation*}
W_Y(x)\ge W_Y(y) \quad\Longleftrightarrow\quad W_Y(x^{\prime})\ge
W_Y(y^{\prime}).
\end{equation*}
Thus $W_Y$ satisfies stationarity.

We next verify monotonicity.
Suppose $x\ge y$ coordinatewise in $Y^{\mathbb{N}}$.
Since $\tau$ reverses the order, we have
\begin{equation*}
\tau(y)\ge \tau(x)
\end{equation*}
coordinatewise in $\mathbb{N}^{\mathbb{N}}$.
By monotonicity of $W_{\mathbb{N}}$,
\begin{equation*}
W_{\mathbb{N}}(\tau(y))\ge W_{\mathbb{N}}(\tau(x)).
\end{equation*}
Using \eqref{E7}, multiplying by $-1$ reverses the inequality, so
\begin{equation*}
W_Y(x)=-W_{\mathbb{N}}(\tau(x))\ge -W_{\mathbb{N}}(\tau(y))=W_Y(y).
\end{equation*}
Hence $W_Y$ is monotone.

Finally, we verify AWP.
Suppose $x,y\in X$ satisfy $x\ge y$ coordinatewise and that there exists a finite integer $\bar N\in\mathbb{N}$ such that
\begin{equation*}
x_t>y_t\quad\text{for all }t>\bar N.
\end{equation*}
Since $\tau$ reverses the order, we have
\begin{equation*}
\tau(y)\ge \tau(x)
\end{equation*}
coordinatewise, and also
\begin{equation*}
\tau(y)_t>\tau(x)_t \quad\text{for all }t>\bar N.
\end{equation*}
By AWP for $W_{\mathbb{N}}$ applied to the pair $(\tau(y),\tau(x))$,
\begin{equation*}
W_{\mathbb{N}}(\tau(y))>W_{\mathbb{N}}(\tau(x)).
\end{equation*}
Therefore, using \eqref{E7},
\begin{equation*}
W_Y(x)=-W_{\mathbb{N}}(\tau(x)) > -W_{\mathbb{N}}(\tau(y))=W_Y(y).
\end{equation*}
Thus $W_Y$ satisfies AWP.
\end{proof}
The result of Proposition \ref{P2} can be extended from order type $\omega^{\ast}$ to an arbitrary non-empty reverse well ordered subset of $\mathbb{R}$, ordered by the usual $<$ relation.

\begin{lemma}
\label{L6}
\emph{Let $Y\subseteq\mathbb{R}$ be an arbitrary non-empty reverse well ordered set with respect to the usual $<$ relation on $\mathbb{R}$, and let $X=Y^{\mathbb{N}}$.
Then there exists a monotone social welfare function $W:X\to\mathbb{R}$ satisfying Anonymity, Almost Weak Pareto, and stationarity.}
\end{lemma}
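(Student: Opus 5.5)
The plan is to reduce Lemma \ref{L6} to Lemma \ref{L5} by order reversal, exactly as Proposition \ref{P2} was obtained from Proposition \ref{P1}. Let $Y\subseteq\mathbb{R}$ be non-empty and reverse well ordered. Set
\begin{equation*}
-Y=\{-a:a\in Y\}\subseteq\mathbb{R},
\end{equation*}
with the usual order, and let $\sigma:Y\to -Y$ be $\sigma(a)=-a$. Then $\sigma$ is an order-reversing bijection. A non-empty subset $B\subseteq -Y$ has the form $B=-A$ for some non-empty $A\subseteq Y$. The last element of $A$ is mapped to the first element of $B$, so $-Y$ is a non-empty well ordered subset of $\mathbb{R}$. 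Lemma \ref{L5} therefore supplies a monotone SWF
\begin{equation*}
W_{-Y}:(-Y)^{\mathbb{N}}\to\mathbb{R}
\end{equation*}
satisfying anonymity, AWP, and stationarity.

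Extend $\sigma$ coordinatewise to streams by $\sigma(x)=(-x_1,-x_2,\cdots)$, and define
\begin{equation*}
W(x)=-W_{-Y}(\sigma(x)) \quad\text{for every } x\in X=Y^{\mathbb{N}}.
\end{equation*}
Each axiom is then checked by the corresponding step in the proof of Proposition \ref{P2}.

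\begin{itemize}
\item Since $\sigma$ acts coordinatewise, it commutes with permutations and with deletion of the first coordinate: $\sigma(x_\pi)=\sigma(x)_\pi$ and $\sigma(x')=\sigma(x)'$.
\item Anonymity, including strong anonymity if Lemma \ref{L5} gives it, passes through unchanged.
\item For stationarity, the equivalence for $W_{-Y}$ holds with both inequalities reversed. Multiplying by $-1$ then gives the equivalence for $W$.
\item For monotonicity and AWP, note that $x\ge y$ with $x_t>y_t$ for all $t>\bar N$ implies $\sigma(y)\ge\sigma(x)$ with $\sigma(y)_t>\sigma(x)_t$ for all $t>\bar N$. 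Applying the corresponding property of $W_{-Y}$ to the pair $(\sigma(y),\sigma(x))$ and then negating gives $W(x)\ge W(y)$, respectively $W(x)>W(y)$.
\end{itemize}

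The only point requiring any care is the first step. Unlike the $\omega^{\ast}$ case, a reverse well ordered $Y$ need not be order-isomorphic to a subset of $\mathbb{N}$; it can be any countable reverse ordinal embedded in $\mathbb{R}$. We therefore must not reuse Proposition \ref{P1} directly. Instead we invoke Lemma \ref{L5}, whose hypothesis covers arbitrary well ordered subsets of $\mathbb{R}$, and we need the image $-Y$ to be a genuine subset of $\mathbb{R}$ with the usual order. Negation guarantees this. Once that is settled, the remainder is a verbatim transcription of the proof of Proposition \ref{P2} with $\tau$ replaced by $\sigma$ and $W_{\mathbb{N}}$ replaced by $W_{-Y}$.
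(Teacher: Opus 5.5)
Your proposal is correct and follows the paper's proof: the paper likewise reflects $Y$ onto a well ordered set, applies Lemma \ref{L5}, sets $W(x)=-W^{\dagger}(\tau(x))$, and checks each axiom exactly as in Proposition \ref{P2}. The only difference is the reflection map: the paper uses $\tau(a)=1-\phi(a)$, which lands in $(0,1)$, whereas you use plain negation $a\mapsto -a$. Negation works equally well, since Lemma \ref{L5} applies to any non-empty well ordered subset of $\mathbb{R}$.
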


\begin{proof}
See Appendix \ref{app}.
\end{proof}

\subsection{SWF for $Y$ of order type $\protect\omega + \protect\omega^{\ast} $}

We now consider the case in which $Y$ has order type $\omega+\omega^{\ast}$.
Thus $Y=C\cup B$, where
\begin{equation*}
C= \{c_1, c_2, \cdots\}, \qquad c_1<c_2<\cdots,
\end{equation*}
is of order type $\omega$,
\begin{equation*}
B= \{b_1, b_2, \cdots\}, \qquad b_1>b_2>\cdots,
\end{equation*}
is of order type $\omega^{\ast}$, and $b_k>c_l$ for all $k,l\in\mathbb{N}$.\footnote{
For an example, consider $B= \left\{b_n = \frac{n+3}{2(n+1)}: n\in \mathbb{N}\right\}$ and $C= \left\{c_n = \frac{n-1}{2(n+1)}: n\in \mathbb{N}\right\}$.}
For $x\in X$ and $i\in\mathbb{N}$, define
\begin{equation}
\label{E8}
m_i(x)=\left| \{t\in\mathbb{N}:x_t=b_i\}\right|, \quad n_i(x)=\left| \{t\in\mathbb{N}:x_t=c_i\}\right|.
\end{equation}
Thus $m_i(x),n_i(x)\in\mathcal{A}$.
The number $m_i(x)$ counts the occurrences of the upper utility level $b_i$.
The number $n_i(x)$ counts the occurrences of the lower utility level $c_i$.

\begin{definition}
\label{D8}
For $x\in X$, using the multiplicities in \eqref{E8}, define $p(x)\in\mathcal{A}^{\mathbb{N}}$ as follows.
For every $i\in\mathbb{N}$,
\begin{equation*}
p_{2i-1}(x)=
\begin{cases}
m_i(x), & m_h(x)<\infty\text{ for every }h<i, \\
\infty, & m_h(x)=\infty\text{ for some }h<i,\end{cases}\end{equation*}
and
\begin{equation*}
p_{2i}(x)=
\begin{cases}
n_i(x), & n_h(x)<\infty\text{ for every }h<i, \\
\infty, & n_h(x)=\infty\text{ for some }h<i.\end{cases}\end{equation*}
Let
\begin{equation*}
\mathcal{P}=\{p(x):x\in X\}.
\end{equation*}
\end{definition}

Odd profile coordinates count upper utility levels.
On odd coordinates, larger multiplicity is better.
Even profile coordinates count lower utility levels.
On even coordinates, smaller multiplicity is better.

\begin{definition}[Lexicographic profile order]
\label{D9}
For $p,q\in\mathcal{A}^{\mathbb{N}}$, define $p\succeq q $ if either $p=q$, or, letting
\begin{equation*}
k=\min\{i\in\mathbb{N}:p_i\ne q_i\},
\end{equation*}
one has
\begin{equation*}
\begin{cases}
p_k>q_k, & k\text{ odd}, \\
p_k<q_k, & k\text{ even}.\end{cases}\end{equation*}
Write $p\succ q$ if $p\succeq q$ and $p\ne q$.
\end{definition}

In the next lemma, we show that the lexicographic profile order admits a countable order-separating family.

\begin{lemma}
\label{L7}
\emph{There exists a countable set $\mathcal{D}\subset\mathcal{A}^{\mathbb{N}}$ such that, for all $p,q\in\mathcal{P}$, if $p\succ q$, then
there exists $d\in\mathcal{D}$ such that
\begin{equation*}
p\succeq d\succ q.
\end{equation*}}
\end{lemma}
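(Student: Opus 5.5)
The plan is to imitate Lemma \ref{L3}, but with a simpler separating element. Because the tail can be filled with \emph{worst possible} values, no case analysis on the gap $q_k-p_k$ should be needed. For each coordinate $i$, let $w_i\in\mathcal{A}$ denote the worst value in the sense of Definition \ref{D9}: $w_i=0$ if $i$ is odd, since on odd coordinates larger is better and $0$ is the least element of $\mathcal{A}$, and $w_i=\infty$ if $i$ is even, since on even coordinates smaller is better and $\infty$ is the greatest element of $\mathcal{A}$. Then I set
\begin{equation*}
\mathcal{D}=\{(a_1,\cdots,a_m,w_{m+1},w_{m+2},\cdots): m\in\mathbb{N}_0,\ a_1,\cdots,a_m\in\mathcal{A}\}.
\end{equation*}
This is a subset of $\mathcal{A}^{\mathbb{N}}$, which is all the statement requires; we do not need $\mathcal{D}\subset\mathcal{P}$. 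It is countable because each element is determined by a finite tuple from the countable set $\mathcal{A}$. So $\mathcal{D}$ is a countable union over $m$ of the countable sets $\mathcal{A}^m$, by the same denumerability fact invoked in the proof of Lemma \ref{L3}.

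For separation, I take $p,q\in\mathcal{P}$ with $p\succ q$ and let $k=\min\{i:p_i\ne q_i\}$. I then define
\begin{equation*}
d=(p_1,\cdots,p_k,w_{k+1},w_{k+2},\cdots)\in\mathcal{D}.
\end{equation*}
First, $d\succ q$. Indeed, $d_i=p_i=q_i$ for $i<k$ and $d_k=p_k\ne q_k$, so the first difference between $d$ and $q$ is at $k$. There $d_k=p_k$ stands in exactly the relation to $q_k$ that gave $p\succ q$: it is larger if $k$ is odd and smaller if $k$ is even.

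Second, $p\succeq d$. If $p=d$ there is nothing to show. Otherwise the first difference is at some $j>k$, where $d_j=w_j\ne p_j$. Since $w_j$ is the extreme worst element of $\mathcal{A}$ for the parity of $j$, any $p_j\ne w_j$ is strictly better. Concretely, if $j$ is odd then $p_j>0=d_j$, and if $j$ is even then $p_j<\infty=d_j$. Hence $p\succ d$.

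The only point needing care, and the step I regard as the crux, is that $\mathcal{A}$ has a least element $0$ and a greatest element $\infty$. That is exactly what makes the worst tail dominated by every profile agreeing with it up to coordinate $k$. The infinite multiplicities are also harmless here: $p_k$ may equal $\infty$, and it is simply copied into the finite block, which is why the block is taken from $\mathcal{A}$ rather than from $\mathbb{N}_0$. Having both a best element on the even coordinates and a worst element on the odd coordinates available is what removes the ``$q_k=p_k+1$'' subcase that complicated Lemma \ref{L3}.
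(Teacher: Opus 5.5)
Your proof is correct and is essentially the paper's construction: cut $p$ off at the first difference $k$ and append a tail of worst values. The one difference is in the tail. The paper puts $\infty$ on later odd coordinates once an odd $\infty$ occurs in the prefix, so it must appeal to the canonical form of $p\in\mathcal{P}$ to get $p\succeq d$, whereas your uniformly worst tail (the one the paper itself uses in proving Lemma \ref{L9}) makes the separation work on all of $\mathcal{A}^{\mathbb{N}}$. Incidentally, the choice $d=(p_1,\cdots,p_k,\infty,\infty,\cdots)$ already works in Lemma \ref{L3}, so the case split there was never needed; it is not removed by anything special to this lemma.
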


\begin{proof}
For every finite sequence $a=(a_1,\cdots,a_k)\in\mathcal{A}^k$, define $d(a)\in\mathcal{A}^{\mathbb{N}}$ as follows.
For $i\le k$, set
\begin{equation*}
d_i(a)=a_i.
\end{equation*}
For $i>k$, set
\begin{equation*}
d_i(a)=
\begin{cases}
0, & i\text{ odd and no odd }h\le k\text{ satisfies }a_h=\infty, \\
\infty, & i\text{ odd and some odd }h\le k\text{ satisfies }a_h=\infty, \\
\infty, & i\text{ even}.\end{cases}\end{equation*}
Let
\begin{equation*}
\mathcal{D}=\{d(a):a\in\mathcal{A}^k\text{ for some }k\in\mathbb{N}\}.
\end{equation*}
The set $\mathcal{A}$ is countable.
Therefore the set of finite sequences over $\mathcal{A}$ is countable.
Hence $\mathcal{D}$ is countable.

Take $p,q\in\mathcal{P}$ with $p\succ q$.
Let
\begin{equation*}
k=\min\{i\in\mathbb{N}:p_i\ne q_i\}.
\end{equation*}
Define
\begin{equation*}
d=d(p_1,\cdots,p_k).
\end{equation*}
Then $d\in\mathcal{D}$.
For every $i<k$, one has $d_i=p_i=q_i$.
At coordinate $k$, one has $d_k=p_k$.
Since $p\succ q$, the comparison between $d$ and $q$ is decided at coordinate $k$.
Therefore $d\succ q$.

It remains to show $p\succeq d$.
The sequences $p$ and $d$ agree through coordinate $k$.
After coordinate $k$, the sequence $d$ has the worst available tail relative to the parity of the coordinates.
On odd coordinates, where larger values are better, the worst value is $0$ unless an odd $\infty$ has already appeared.
If an odd $\infty$ has already appeared in the prefix, all later odd coordinates of any canonical profile are also $\infty$.
On even coordinates, where smaller values are better, the worst value is $\infty$.
Thus, if $p$ and $d$ differ after coordinate $k$, then at their first later difference $p$ is better than $d$.
Hence $p\succeq d$.
Therefore $p\succeq d\succ q$.
\end{proof}

Using the order-separating family established in Lemma \ref{L7}, in the next lemma we show that the profile order admits a real-valued representation.

\begin{lemma}
\label{L8}
\emph{There exists a function $V:\mathcal{P}\to\mathbb{R}$ such that, for all $p,q\in\mathcal{P}$,
\begin{equation*}
V(p)\ge V(q) \quad\Longleftrightarrow\quad p\succeq q.
\end{equation*}}
\end{lemma}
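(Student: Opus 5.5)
We mirror the proof of Lemma \ref{L4}, with Lemma \ref{L7} supplying the countable order-separating family. By Lemma \ref{L7} we fix a countable set $\mathcal{D}\subset\mathcal{A}^{\mathbb{N}}$ with the separation property on $\mathcal{P}$, and enumerate it as $\mathcal{D}=\{d^1,d^2,d^3,\cdots\}$. If $\mathcal{D}$ is finite, we repeat an element; repetitions only change the weights and not the argument. For $p\in\mathcal{P}$ we define
\begin{equation*}
V(p)=\sum_{s=1}^{\infty}\frac{1}{2^s}\mathbf{1}_{\{p\succeq d^s\}},
\end{equation*}
which converges and takes values in $[0,1]$.

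The first step is to confirm that the relation $\succeq$ of Definition \ref{D9} is complete and transitive on all of $\mathcal{A}^{\mathbb{N}}$. This matters because the comparisons $p\succeq d^s$ involve elements $d^s$ that need not lie in $\mathcal{P}$.

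\emph{Completeness.} For distinct $p,q$ the set of differing coordinates has a least element $k$, and the parity rule at $k$ decides the comparison, since $\mathcal{A}$ is totally ordered.

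\emph{Transitivity.} Suppose $p\succ q\succ r$. Let $k_1$ be the first difference between $p$ and $q$, and $k_2$ the first difference between $q$ and $r$. If $k_1<k_2$, then $p$ and $r$ first differ at $k_1$, with $r_{k_1}=q_{k_1}$, so $p\succ r$. If $k_2<k_1$, then $p$ and $r$ first differ at $k_2$, with $p_{k_2}=q_{k_2}$, so again $p\succ r$. If $k_1=k_2$, then at that coordinate both comparisons have the same parity-dependent direction, so transitivity in $\mathcal{A}$ gives $p\succ r$.

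This is the only point that needs any care beyond the argument of Lemma \ref{L4}. I expect it to be routine, but it must be stated because the separating elements live outside $\mathcal{P}$.

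Monotonicity of $V$ then follows. If $p\succeq q$ and $q\succeq d^s$, transitivity gives $p\succeq d^s$. Hence every indicator for $p$ dominates the corresponding indicator for $q$, and $V(p)\ge V(q)$.

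For strictness, suppose $p\succ q$. Lemma \ref{L7} supplies an index $s^{\ast}$ with $p\succeq d^{s^{\ast}}\succ q$. Then $\mathbf{1}_{\{p\succeq d^{s^{\ast}}\}}=1$. On the other side, $q\succeq d^{s^{\ast}}$ would contradict $d^{s^{\ast}}\succ q$ by antisymmetry: the relation $\succ$ is asymmetric, since the first difference is decided in one direction only. So $\mathbf{1}_{\{q\succeq d^{s^{\ast}}\}}=0$. Combined with the termwise domination, this gives $V(p)>V(q)$.

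Finally, completeness of $\succeq$ on $\mathcal{P}$ yields the equivalence. If $p\succeq q$, then $V(p)\ge V(q)$. If $p\not\succeq q$, then $q\succ p$, so $V(q)>V(p)$ and therefore $V(p)\not\ge V(q)$.
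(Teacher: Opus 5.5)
Your proposal is correct and follows essentially the same route as the paper: the dyadic-weighted indicator sum over the enumerated separating family of Lemma \ref{L7}, weak monotonicity via transitivity, strictness via the separating element $d^{s^{\ast}}$, and the converse via completeness. Your explicit check that $\succeq$ is complete, transitive and asymmetric on all of $\mathcal{A}^{\mathbb{N}}$ is needed because the $d^s$ need not lie in $\mathcal{P}$. The paper uses this tacitly, so your check is a welcome addition rather than a different approach.
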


\begin{proof}
Let $\mathcal{D}=\{d^1,d^2,\cdots\}$ be an enumeration of the countable set from Lemma \ref{L7}.
For $p\in\mathcal{P}$, define
\begin{equation*}
V(p)=\sum_{s=1}^{\infty}2^{-s}\mathbf{1}_{\{p\succeq d^s\}}.
\end{equation*}
The sum is well-defined because each summand is between $0$ and $2^{-s}$.
Moreover, $0\le V(p)\le 1$.

Suppose $p\succeq q$.
If $q\succeq d^s$, then transitivity gives $p\succeq d^s$.
Thus
\begin{equation*}
\mathbf{1}_{\{p\succeq d^s\}}\ge \mathbf{1}_{\{q\succeq d^s\}}
\end{equation*}
for every $s\in\mathbb{N}$.
Therefore $V(p)\ge V(q)$.

Suppose $p\succ q$.
By Lemma \ref{L7}, there exists $s^{\ast}\in\mathbb{N}$ such that
\begin{equation*}
p\succeq d^{s^{\ast}}\succ q.
\end{equation*}
For this index, one has
\begin{equation*}
\mathbf{1}_{\{p\succeq d^{s^{\ast}}\}}=1 \quad\text{and}\quad \mathbf{1}_{\{q\succeq d^{s^{\ast}}\}}=0.
\end{equation*}
For every other index $s$, the inequality
\begin{equation*}
\mathbf{1}_{\{p\succeq d^s\}}\ge \mathbf{1}_{\{q\succeq d^s\}}
\end{equation*}
follows from $p\succeq q$ and transitivity.
Hence $V(p)>V(q)$.

Finally, suppose $V(p)\ge V(q)$.
If $p\not\succeq q$, completeness gives $q\succ p$.
The previous paragraph would imply $V(q)>V(p)$, a contradiction.
Therefore $p\succeq q$.
\end{proof}

The next theorem records the two one-sided cases together with the two-sided case.

\begin{theorem}
\label{T1}
Let $Y\subset\mathbb{R}$ have order type $\omega$, $\omega^{\ast}$, or $\omega+\omega^{\ast}$, and let $X=Y^{\mathbb{N}}$.
Then there exists a monotone social welfare function $W:X\to\mathbb{R}$ satisfying AN, AWP, and stationarity.
\end{theorem}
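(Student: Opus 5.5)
The cases of order type $\omega$ and $\omega^{\ast}$ are already covered: for $\omega$, transport Proposition \ref{P1} along the order isomorphism $\sigma:Y\to\mathbb{N}$ applied coordinatewise, and for $\omega^{\ast}$ use Proposition \ref{P2}. An order-preserving bijection applied coordinatewise commutes with finite permutations and with deleting the first coordinate. It also preserves both $\ge$ and strict coordinatewise inequality. Hence $W=W_{\mathbb{N}}\circ\sigma$ inherits AN, AWP, monotonicity and stationarity. The substantive case is $\omega+\omega^{\ast}$.

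For that case I set $W(x)=V(p(x))$, with $p$ from Definition \ref{D8} and $V$ from Lemma \ref{L8}. By Lemma \ref{L8} it suffices to check each property at the level of profiles under the order of Definition \ref{D9}.

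\textbf{Anonymity.} Any permutation preserves every multiplicity $m_i$ and $n_i$, so $p(x_\pi)=p(x)$.

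\textbf{Stationarity.} Suppose $x_1=y_1$. Deleting this common coordinate lowers exactly one multiplicity, the same one in $x$ and in $y$, by one when it is finite, and leaves $\infty$ unchanged. The truncation rule in Definition \ref{D8} depends only on which multiplicities are infinite, and deletion does not change that. So the profile coordinate affected is shifted by the same map $a\mapsto a-1$ (with $\infty\mapsto\infty$) in both profiles. This map is strictly monotone on the values occurring there, so the first differing coordinate and the direction of the difference are both preserved.

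\textbf{Monotonicity and AWP.} Take $x\ge y$. I would work with cumulative counts from both ends. For the lower chain, $C_r(x)=|\{t:x_t<c_r\}|$ satisfies $C_r(x)\le C_r(y)$. For the upper chain, $U_r(x)=|\{t:x_t>b_r\}|$ satisfies $U_r(x)\ge U_r(y)$. The sums of the first entries of $p$ along odd coordinates give the $U$'s, and along even coordinates give the $C$'s. Therefore at the first differing profile coordinate, assuming all earlier coordinates are finite and equal, the inequality runs in the favourable direction for $x$. The truncation-by-$\infty$ cases are handled as in Proposition \ref{P1}. This gives $p(x)\succeq p(y)$.

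For strictness under AWP, I need some cumulative count that differs strictly, and I split on where $\{x_t:t>\bar N\}$ lies.

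\emph{If infinitely many $t>\bar N$ have $x_t\in C$:} let $c_{r_\ast}$ be the least such value, exactly as in Proposition \ref{P1}. Then $C_{r_\ast}(x)$ is finite while $C_{r_\ast}(x)<C_{r_\ast}(y)$.

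\emph{Otherwise:} all but finitely many $x_t$ lie in $B$. If some value $b_j$ is taken by infinitely many such $t$, take the largest such $j$, i.e.\ the smallest such value. A pigeonhole argument then looks at the infinitely many indices $t$ with $x_t=b_j$, where $y_t<b_j$. Only finitely many of them have $x_t$ below all earlier thresholds, and comparing $U_j$ with $U_{j-1}$ or $C$-counts should then give a strict difference.

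The main obstacle is the remaining subcase: $x$ takes values in $B$ with all multiplicities finite, and the $y$-values are interleaved among the $b$'s. Then every $U_r$ can be finite and equal, or infinite in both streams, and the comparison may only be decided at an even coordinate, i.e.\ by how many $y_t$ drop into $C$. Here I would try to show that, with $x$ using only finitely many occurrences of each level $b_i$, the multiplicity vectors $(m_i(x))$ and $(m_i(y))$ must differ at a first index. This is because $x\ge y$ with strict inequality on a cofinite set shifts mass strictly downward along $B$. I would compare the cumulative counts $U_r$ for $r=1,2,\dots$: if these were all equal and finite, the finite sets $\{t:x_t\ge b_r\}$ and $\{t:y_t\ge b_r\}$ would coincide in size with the first containing the second, hence coincide as sets. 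That forces $x_t=y_t$ for all $t$ whose value lies in $B$, contradicting strict improvement on infinitely many such $t$.
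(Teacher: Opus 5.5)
Your reduction for the $\omega$ and $\omega^{\ast}$ cases, the choice $W=V\circ p$ for $\omega+\omega^{\ast}$, and your checks of anonymity, stationarity and monotonicity via cumulative counts all agree with the paper. Two of your three AWP subcases are sound:
\begin{itemize}
\item the subcase where tail values of $x$ fall in $C$, for which a single such tail value already suffices;
\item the subcase where every $m_i(x)$ is finite, where your argument is correct: equal finite counts $|\{t:x_t\ge b_r\}|=|\{t:y_t\ge b_r\}|$ force the nested sets to coincide, hence $x_t=y_t$ whenever $x_t\in B$.
\end{itemize}

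The gap is the middle subcase. You take ``the largest $j$ with $m_j(x)=\infty$, i.e.\ the smallest such value.'' Since $B$ has no least element, this need not exist. For example, let the tail of $x$ visit every level $b_j$ infinitely often, and set $y_t=b_{j+1}$ whenever $x_t=b_j$. Then $m_j(x)=\infty$ for every $j$, and there is no largest such $j$. The pigeonhole step that follows is only asserted (``should then give a strict difference''). So AWP is not established in this subcase.

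The repair is to extremize in the other direction. Take the \emph{smallest} $j_0$ with $m_{j_0}(x)=\infty$. Then $|\{t:x_t\ge b_{j_0}\}|=\infty$. On the other hand, a tail index with $y_t\ge b_{j_0}$ must have $x_t>b_{j_0}$, i.e.\ $x_t\in\{b_1,\dots,b_{j_0-1}\}$. By minimality of $j_0$ this happens only finitely often, so $|\{t:y_t\ge b_{j_0}\}|<\infty$, and an upper cumulative inequality is strict.

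The paper avoids splitting on multiplicities altogether. Once all tail values of $x$ lie in $B$ (enlarging $\bar N$ if necessary), it considers $y$ instead:
\begin{itemize}
\item If some tail value of $y$ lies in $B$, take $b_{r^\ast}=\max\{y_t:t>\bar N,\ y_t\in B\}$, which exists because $B$ is reverse well ordered. Then $\{t:y_t\ge b_{r^\ast-1}\}$ is finite, while the index attaining $b_{r^\ast}$ is counted for $x$.
\item If all tail values of $y$ lie in $C$, any tail value $x_{t^\ast}=b_r$ gives $|\{t:x_t\ge b_r\}|>|\{t:y_t\ge b_r\}|$.
\end{itemize}
This is the genuine mirror image of Proposition~\ref{P1}. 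On the reverse well ordered part, you extremize over the tail values of the dominated stream $y$ (its maximum), not over those of the dominating stream $x$ (its minimum).
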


\begin{proof}
First suppose that $Y$ has order type $\omega$.
Let $W_{\mathbb{N}}:\mathbb{N}^{\mathbb{N}}\to\mathbb{R}$ be the social welfare function constructed in Proposition \ref{P1}.
Choose an order-preserving bijection
\begin{equation*}
\sigma:Y\to\mathbb{N}.
\end{equation*}
Extend $\sigma$ coordinate-wise to streams by writing
\begin{equation*}
\sigma(x)=(\sigma(x_1),\sigma(x_2),\cdots).
\end{equation*}
Define
\begin{equation*}
W_Y(x)=W_{\mathbb{N}}(\sigma(x)) \quad\text{for every }x\in Y^{\mathbb{N}}.
\end{equation*}
If $\pi$ is a finite permutation of $\mathbb{N}$, then
\begin{equation*}
\sigma(x_{\pi})=\sigma(x)_{\pi}.
\end{equation*}
Finite anonymity of $W_{\mathbb{N}}$ therefore implies finite anonymity of $W_Y$.
Since $\sigma$ preserves the order, $x\ge y$ implies $\sigma(x)\ge \sigma(y)$, and a strict tail improvement of $x$ over $y$ is carried into a strict tail improvement of $\sigma(x)$ over $\sigma(y)$.
Thus monotonicity and AWP of $W_Y$ follow from monotonicity and AWP of $W_{\mathbb{N}}$.
Finally, if $x_1=y_1$, then $\sigma(x)_1=\sigma(y)_1$ and
\begin{equation*}
\sigma(x^{\prime})=(\sigma(x))^{\prime}, \quad
\sigma(y^{\prime})=(\sigma(y))^{\prime}.
\end{equation*}
By stationarity of $W_{\mathbb{N}}$,
\begin{equation*}
W_Y(x)\ge W_Y(y) \quad\Longleftrightarrow\quad W_Y(x^{\prime})\ge
W_Y(y^{\prime}).
\end{equation*}
Thus $W_Y$ satisfies stationarity, and the conclusion holds in the case of order type $\omega$.

If $Y$ has order type $\omega^{\ast}$, the conclusion follows from Proposition \ref{P2}.

It remains to consider the case in which $Y$ has order type $\omega+\omega^{\ast}$.
Let $V:\mathcal{P}\to\mathbb{R}$ be the representing function from Lemma \ref{L8}.
Define
\begin{equation}
\label{E9}
W(x)=V(p(x)) \quad\text{for every }x\in X.
\end{equation}

\begin{itemize}
\item {\textbf{Anonymity (AN).}
Let $\pi$ be a finite permutation of $\mathbb{N}$.
For every $i\in\mathbb{N}$, a permutation preserves the cardinality of the level sets $\{t\in\mathbb{N}:x_t=b_i\}$ and $\{t\in\mathbb{N}:x_t=c_i\}$.
Thus
\begin{equation*}
m_i(x_{\pi})=m_i(x) \quad\text{and}\quad n_i(x_{\pi})=n_i(x).
\end{equation*}
Therefore $p(x_{\pi})=p(x)$.
Hence, by \eqref{E9}, $W(x_{\pi})=W(x)$.
So $W$ satisfies AN.
Observe that even in the case of arbitrary permutations, the multiplicity of every utility level is preserved.
Hence $W$ satisfies the Strong Anonymity axiom as well.}

\item {\textbf{Stationarity.}
Take $x,y\in X$ with $x_1=y_1$.
Let
\begin{equation*}
x^{\prime}=(x_2,x_3,\cdots), \quad y^{\prime}=(y_2,y_3,\cdots).
\end{equation*}
We show that
\begin{equation}
\label{E10}
p(x)\succeq p(y) \quad\Longleftrightarrow\quad p(x^{\prime})\succeq
p(y^{\prime}).
\end{equation}
Suppose first that $x_1=y_1=b_r$ for some $r\in\mathbb{N}$.
Deleting the common first coordinate changes only the upper multiplicity at level $b_r$.
For every $i\ne r$, one has
\begin{equation*}
m_i(x^{\prime})=m_i(x), \quad m_i(y^{\prime})=m_i(y),
\end{equation*}
and for every $i\in\mathbb{N}$, one has
\begin{equation*}
n_i(x^{\prime})=n_i(x), \quad n_i(y^{\prime})=n_i(y).
\end{equation*}
At level $b_r$, deleting the common first coordinate replaces each finite positive value $a$ by $a-1$ and leaves $\infty$ equal to $\infty$.
This operation preserves equality, strict inequality, and the property of being infinite on $\{1,2,3,\cdots\}\cup\{\infty\}$.
Therefore the odd profile coordinates are compared the same way before and after deletion.
The even profile coordinates are unchanged.
Hence \eqref{E10} holds.
}

Suppose next that $x_1=y_1=c_r$ for some $r\in\mathbb{N}$.
Deleting the common first coordinate changes only the lower multiplicity at level $c_r$.
For every $i\ne r$, one has
\begin{equation*}
n_i(x^{\prime})=n_i(x), \quad n_i(y^{\prime})=n_i(y),
\end{equation*}
and for every $i\in\mathbb{N}$, one has
\begin{equation*}
m_i(x^{\prime})=m_i(x), \quad m_i(y^{\prime})=m_i(y).
\end{equation*}
At level $c_r$, deleting the common first coordinate replaces each finite positive value $a$ by $a-1$ and leaves $\infty$ equal to $\infty$.
This operation preserves equality, strict inequality, and the property of being infinite on $\{1,2,3,\cdots\}\cup\{\infty\}$.
Therefore the even profile coordinates are compared the same way before and after deletion, remembering that smaller values are better on even coordinates.
The odd profile coordinates are unchanged.
Hence again \eqref{E10} holds.
By Lemma \ref{L8} and \eqref{E9}, thus,
\begin{equation*}
W(x)\ge W(y) \quad\Longleftrightarrow\quad W(x^{\prime})\ge W(y^{\prime}).
\end{equation*}
Therefore $W$ satisfies stationarity.

\item {\textbf{Monotonicity.}
Suppose $x\ge y$.
Then, for every $r\in\mathbb{N}$,
\begin{equation*}
\{t\in\mathbb{N}:y_t\ge b_r\}\subseteq\{t\in\mathbb{N}:x_t\ge b_r\},\;\text{and}\; \{t\in\mathbb{N}:x_t\le c_r\}\subseteq\{t\in\mathbb{N}:y_t\le c_r\}.
\end{equation*}
Hence
\begin{equation}
\label{E11}
\left| \{t\in\mathbb{N}:x_t\ge b_r\}\right| \ge \left| \{t\in\mathbb{N}:y_t\ge b_r\}\right|,\;\text{and}\; \left| \{t\in\mathbb{N}:x_t\le
c_r\}\right| \le \left| \{t\in\mathbb{N}:y_t\le c_r\}\right|
\end{equation}
for every $r\in\mathbb{N}$.
If $p(x)=p(y)$, then $W(x)=W(y)$.
Suppose $p(x)\ne p(y)$.
Let
\begin{equation*}
k=\min\{i\in\mathbb{N}:p_i(x)\ne p_i(y)\}.
\end{equation*}
If $k=2r-1$ is odd, then all earlier upper multiplicities agree and are finite.
Thus
\begin{equation*}
\left| \{t\in\mathbb{N}:x_t\ge b_{r-1}\}\right| = \left| \{t\in\mathbb{N}:y_t\ge b_{r-1}\}\right|
\end{equation*}
when $r>1$.
When $r=1$, the corresponding equality is the equality $0=0$ for the empty upper cutoff above $b_1$.
Together with the first inequality in \eqref{E11} and $p_k(x)\ne p_k(y)$, one obtains $m_r(x)>m_r(y)$.
Therefore $p_k(x)>p_k(y) $, so $p(x)\succ p(y)$.
If $k=2r$ is even, then all earlier lower multiplicities agree and are finite.
Thus
\begin{equation*}
\left| \{t\in\mathbb{N}:x_t\le c_{r-1}\}\right| = \left| \{t\in\mathbb{N}:y_t\le c_{r-1}\}\right|
\end{equation*}
when $r>1$.
When $r=1$, the corresponding equality is the equality $0=0$ for the empty lower cutoff below $c_1$.
Together with the second inequality in \eqref{E11} and $p_k(x)\ne p_k(y)$, one obtains $n_r(x)<n_r(y)$.
Therefore $p_k(x)<p_k(y) $, so $p(x)\succ p(y)$.
In all cases $p(x)\succeq p(y)$.
By Lemma \ref{L8} and \eqref{E9}, $W(x)\ge W(y)$.
Thus $W$ is monotone.}

\item {\textbf{Almost Weak Pareto (AWP).}
Suppose $x\ge y$ and there exists $\bar N\in\mathbb{N}$ such that $x_t>y_t$ for every $t>\bar N$.
By the monotonicity argument based on \eqref{E11}, $p(x)\succeq p(y)$.
It is enough to prove $p(x)\ne p(y)$.
We will show that at least one cumulative cutoff inequality is strict. }

First suppose that some tail value of $x$ lies in $C$.
Then the nonempty set
\begin{equation*}
\{x_t:t>\bar N,\ x_t\in C\}
\end{equation*}
has a least element.
Let
\begin{equation*}
c_{r^{\ast}}=\min\{x_t:t>\bar N,\ x_t\in C\}.
\end{equation*}
Choose $t^{\ast}>\bar N$ such that $x_{t^{\ast}}=c_{r^{\ast}}$.
Since $x_{t^{\ast}}>y_{t^{\ast}}$, one has $y_{t^{\ast}}<c_{r^{\ast}}$.
The case $r^{\ast}=1$ is impossible because $c_1$ is the minimum element of $Y$.
Thus $r^{\ast}>1$, and $y_{t^{\ast}}\le c_{r^{\ast}-1}$.
By minimality of $c_{r^{\ast}}$, no tail value of $x$ is below $c_{r^{\ast}}$.
Therefore $\{t\in\mathbb{N}:x_t\le c_{r^{\ast}-1}\}$ is finite.
Moreover, $t^{\ast}$ belongs to $\{t\in\mathbb{N}:y_t\le c_{r^{\ast}-1}\}$ but not to $\{t\in\mathbb{N}:x_t\le c_{r^{\ast}-1}\}$.
Hence
\begin{equation*}
\left| \{t\in\mathbb{N}:x_t\le c_{r^{\ast}-1}\}\right| < \left| \{t\in\mathbb{N}:y_t\le c_{r^{\ast}-1}\}\right|.
\end{equation*}
So a lower cumulative cutoff inequality is strict.

Now suppose that every tail value of $x$ lies in $B$.
If some tail value of $y$ lies in $B$, then the nonempty set
\begin{equation*}
\{y_t:t>\bar N,\ y_t\in B\}
\end{equation*}
has a maximum element.
Let
\begin{equation*}
b_{r^{\ast}}=\max\{y_t:t>\bar N,\ y_t\in B\}.
\end{equation*}
Choose $t^{\ast}>\bar N$ such that $y_{t^{\ast}}=b_{r^{\ast}}$.
Since $x_{t^{\ast}}>y_{t^{\ast}}$, one has $x_{t^{\ast}}>b_{r^{\ast}}$.
The case $r^{\ast}=1$ is impossible because $b_1$ is the maximum element of $Y$.
Thus $r^{\ast}>1$, and $x_{t^{\ast}}\ge b_{r^{\ast}-1}$.
By maximality of $b_{r^{\ast}}$, no tail value of $y$ is above $b_{r^{\ast}}$.
Therefore $\{t\in\mathbb{N}:y_t\ge b_{r^{\ast}-1}\}$ is finite.
Moreover, $t^{\ast}$ belongs to $\{t\in\mathbb{N}:x_t\ge b_{r^{\ast}-1}\}$ but not to $\{t\in\mathbb{N}:y_t\ge b_{r^{\ast}-1}\}$.
Hence
\begin{equation*}
\left| \{t\in\mathbb{N}:x_t\ge b_{r^{\ast}-1}\}\right| > \left| \{t\in\mathbb{N}:y_t\ge b_{r^{\ast}-1}\}\right|.
\end{equation*}
So an upper cumulative cutoff inequality is strict.

Finally suppose that every tail value of $x$ lies in $B$ and every tail value of $y$ lies in $C$.
Choose any $t^{\ast}>\bar N$.
Write $x_{t^{\ast}}=b_r$.
Then $t^{\ast}$ belongs to $\{t\in\mathbb{N}:x_t\ge b_r\}$.
Since every tail value of $y$ lies in $C$, the set $\{t\in\mathbb{N}:y_t\ge b_r\}$ is finite.
Since $x\ge y$, one has $\{t\in\mathbb{N}:y_t\ge b_r\}\subseteq\{t\in\mathbb{N}:x_t\ge b_r\}$.
The element $t^{\ast}$ belongs to the latter set but not to the former set.
Hence
\begin{equation*}
\left| \{t\in\mathbb{N}:x_t\ge b_r\}\right| > \left| \{t\in\mathbb{N}:y_t\ge
b_r\}\right|.
\end{equation*}
So an upper cumulative cutoff inequality is strict.

In each case, at least one cumulative cutoff inequality is strict.
By the monotonicity argument based on \eqref{E11}, a strict cumulative cutoff inequality implies that the corresponding profiles cannot be equal.
Hence $p(x)\ne p(y)$.
Since $p(x)\succeq p(y)$ and $p(x)\ne p(y)$, one has $p(x)\succ p(y)$.
By Lemma \ref{L8} and \eqref{E9}, $W(x)>W(y)$.
Thus $W$ satisfies AWP.
\end{itemize}

\end{proof}
We generalize the case of $Y$ having order type $\omega+\omega^{\ast}$ as follows.
Let $Y\subset\mathbb{R}$, ordered by the usual $<$ relation, be such that every non-empty subset of $Y$ has either a first or a last element.
By Lemma \ref{L2}, $Y$ can be represented as the ordered sum of two components, of which the first is a well ordered (or empty) set and the second is a reverse well ordered (or empty) set.
The two components need not have the same order type.
The following lemma gives a direct extension to this case by combining lower and upper cumulative cutoff counts for the two components.

\begin{lemma}
\label{L9}
\emph{Let $Y\subset\R$ be non-empty, ordered by the usual $<$ relation, and such that every non-empty subset of $Y$ has either a first or a last element, and let $X=Y^{\mathbb{N}}$.
There exists a monotone social welfare function $W:X\to\mathbb{R}$ satisfying Almost Weak Pareto, Anonymity, and stationarity.}
\end{lemma}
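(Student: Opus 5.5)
The plan is to use Lemma \ref{L2} to write $Y=Y_1+Y_2$, with $Y_1$ well ordered and $Y_2$ reverse well ordered (either may be empty). Both pieces are countable, because a well ordered or reverse well ordered subset of $\mathbb{R}$ is countable: pick a rational between each element and its successor, respectively its predecessor. Instead of imitating the interleaving of Theorem \ref{T1}, which does not work for ordinal-indexed levels, I will compare streams in two steps. First compare the lower component; only if that ties, compare the upper component. The key observation is that a suitable truncation makes the set of canonical profiles \emph{countable}. The representation step is then trivial: enumerate the profiles and use the $\sum 2^{-s}\mathbf{1}_{\{p\succeq d^s\}}$ construction of Lemma \ref{L8} with $\mathcal{D}=\mathcal{P}$ itself.

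\textbf{Construction of the lower profile.} For $x\in X$ and $c\in Y_1$, let $L_c(x)=|\{t:x_t\le c\}|$. Let $\lambda(x)$ be the least $c\in Y_1$ with $L_c(x)=\infty$, or a symbol $\top$ if there is none. It exists whenever the set is nonempty, because $Y_1$ is well ordered. The lower profile $p^L(x)$ consists of the multiplicities $m_c(x)$ for $c<\lambda(x)$, together with the level $\lambda(x)$. Below $\lambda(x)$ only finitely many $t$ carry values, so $p^L(x)$ is a finite set of pairs (level, positive multiplicity) plus one element of $Y_1\cup\{\top\}$, and hence ranges over a countable set. The order is lexicographic at the least differing level, with fewer occurrences better. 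A later $\lambda$ counts as better than an earlier one: at the earlier $\lambda$ one profile has infinitely many occurrences and the other finitely many. The upper profile $p^U(x)$ is defined symmetrically on $Y_2$, using $U_b(x)=|\{t:x_t\ge b\}|$, the greatest $b$ with $U_b=\infty$, comparison at the greatest differing level, and more occurrences better. Set $p(x)=(p^L(x),p^U(x))$ and order $\mathcal{P}$ lexicographically, lower part first. This is a complete and transitive order on a countable set, so $V$ exists exactly as in Lemma \ref{L8}, and I define $W=V\circ p$.

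\textbf{Verification.} Anonymity, including strong anonymity, holds because $p$ depends only on level multiplicities. Stationarity follows as in Theorem \ref{T1}: deleting a common first coordinate decrements one finite multiplicity or leaves an infinite one unchanged, does not move $\lambda$ or its upper analogue, and preserves every comparison. For monotonicity, $x\ge y$ gives $L_c(x)\le L_c(y)$ for all $c\in Y_1$, and $U_b(x)\ge U_b(y)$ for all $b\in Y_2$. At the first lower difference this forces $x$ to be better; this covers the case where the difference is the position of $\lambda$, since $\lambda(x)\ge\lambda(y)$. If the lower parts tie, the same reasoning applies to the upper parts.

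For AWP, I copy the three-case argument of Theorem \ref{T1} to get a strict cumulative inequality.
\begin{itemize}
\item If some tail value of $x$ lies in $Y_1$, take the least one $c^\ast$. Then $L_{c}(x)$ is finite for every $c<c^\ast$. There is a tail time $t^\ast$ with $y_{t^\ast}<c^\ast=x_{t^\ast}$, so the lower profiles differ at some level at most $y_{t^\ast}$, necessarily in favour of $x$.
\item If all tail values of $x$ lie in $Y_2$ and some tail value of $y$ does too, use the greatest such value of $y$ symmetrically in the upper profile.
\item Otherwise the tail of $x$ lies in $Y_2$ and the tail of $y$ in $Y_1$. Then either $\lambda(y)$ exists while $\lambda(x)=\top$, or the upper profiles differ.
\end{itemize}

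\textbf{Main obstacle.} The delicate step is checking that a strict cumulative inequality produces a strict profile difference in the correct direction despite the truncation. Concretely, I must confirm that the first differing level lies at or before $\lambda$ and is not hidden by it, and that the mixed case, where the lower profiles tie at $\top$, genuinely yields an upper difference. I will handle this by writing each comparison directly in terms of $L_c$ and $U_b$: the first $c$ at which $L_c(x)\ne L_c(y)$ determines the outcome, and it does so consistently with the truncated profile.
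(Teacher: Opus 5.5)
\textbf{There is a genuine gap at the representation step, and it is more than a counting slip.}

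First, your claim that truncating at $\lambda$ makes $\mathcal{P}$ countable is false. When $\lambda(x)=\top$ nothing is truncated. Even when $\lambda(x)\in Y_1$ is a limit point of $Y_1$, infinitely many coordinates can lie below it. Already for $Y=\mathbb{N}$, where $Y_1=\mathbb{N}$ is forced, the stream $(1,2,3,\dots)$ has every $L_c$ finite. So $p^L$ is the full multiplicity sequence, and the lower profiles include all of $\{1,2\}^{\mathbb{N}}$. Hence $\mathcal{D}=\mathcal{P}$ is unavailable. Even in the one-sided case you would need a genuine order-separation argument as in Lemma~\ref{L3}.

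Second, your two-step order admits no real-valued representation at all in the two-sided case, so this cannot be fixed by choosing a better separating family. Take $Y$ of order type $\omega+\omega^{\ast}$; Lemma~\ref{L2} then forces $Y_1=C$ and $Y_2=B$. For $s\in\{1,2\}^{\mathbb{N}}$ and $e\in\{0,1\}$, let $x^{s,e}$ be a stream with exactly $s_n$ copies of $c_n$ for each $n$ and $e$ copies of $b_1$. Then:
\begin{itemize}
\item $\lambda(x^{s,e})=\top$ and $p^L(x^{s,e})=s$;
\item the upper parts give $p(x^{s,0})\prec p(x^{s,1})$;
\item for $s\neq s'$, the lower parts already decide.
\end{itemize}
Any representing $V$ would therefore yield uncountably many pairwise disjoint non-degenerate intervals $\bigl(V(p(x^{s,0})),V(p(x^{s,1}))\bigr)$, each containing its own rational. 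That is impossible.

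The culprit is precisely the ``lower component first, then upper component'' lexicographic product of two uncountable chains. You would have to change the order itself.

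The paper avoids this by interleaving. Since $\widehat Y$ is countable, it lists all tests $L_z$ and $U_z$, $z\in\widehat Y$, in a single $\omega$-sequence. Profiles then lie in $\mathcal{A}^{\mathbb{N}}$ under a lexicographic order with each coordinate oriented appropriately. Finite prefixes padded with the worst values ($\infty$ on $L$-tests, $0$ on $U$-tests) form a countable order-separating family.

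Your worry that interleaving fails for ordinal-indexed levels is unfounded, for two reasons:
\begin{itemize}
\item $x\ge y$ makes every test coordinate weakly favourable to $x$;
\item deleting a common first coordinate acts on each test coordinate, on both sides, by the same order-preserving injection.
\end{itemize}
So the choice of enumeration is irrelevant for monotonicity, AWP, and stationarity.
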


\begin{proof}
See Appendix \ref{app}.
\end{proof}

\begin{remark}
\label{R1}
\emph{Unlike the SWF in Theorem \ref{T1} that satisfies stationarity, the SWF in \citet[Proposition 1]{dubey2011}, described below, does not satisfy stationarity as the following example shows.
\begin{equation}
\label{E12}
W(x)= \alpha \inf\{x_t: t\in\mathbb{N}\} +(1-\alpha) \sup\{x_t: t\in\mathbb{N}\}, \;\emph{where}\; \alpha\in (0,1).
\end{equation}
For the function in \eqref{E12}, let $x=\left(\frac{1}{2}, \frac{2}{3}, \frac{3}{4}, \cdots\right)$, and $y=\left(\frac{1}{2}, \frac{3}{4}, \frac{4}{5}, \cdots\right)$.
Then
\begin{equation*}
\inf\{x_t: t\in\mathbb{N}\} =\frac{1}{2} = \inf\{y_t: t\in\mathbb{N}\}, \;\emph{and}\; \sup\{x_t: t\in\mathbb{N}\} =1 = \sup\{y_t: t\in\mathbb{N}\}.
\end{equation*}
Hence,
\begin{equation*}
W(x)=\alpha \left(\frac{1}{2}\right) + (1-\alpha) 1=1-\frac{\alpha}{2}= W(y).
\end{equation*}
Thus $W(x)\ge W(y)$.
However, after deleting the common first coordinate, we obtain $x^{\prime} = \left(\frac{2}{3}, \frac{3}{4}, \frac{4}{5}, \cdots\right)$, and $y^{\prime}=\left(\frac{3}{4}, \frac{4}{5}, \frac{5}{6}, \cdots\right)$ and
\begin{equation*}
W(x^{\prime})=\alpha \left(\frac{2}{3}\right) + (1-\alpha) 1=1-\frac{\alpha}{3}< 1-\frac{\alpha}{4}=\alpha \left(\frac{3}{4}\right) + (1-\alpha) 1 =
W(y^{\prime}).
\end{equation*}
Hence $W(x^{\prime})\ge W(y^{\prime})$ is false.
Therefore, $W(x)\ge W(y)$ $\centernot{\Leftrightarrow}\quad W(x^{\prime})\ge W(y^{\prime})$, even though $x_1=y_1$.
Hence $W$ violates stationarity.}
\end{remark}

\begin{remark}
\label{R2}
\emph{For a domain $Y\subset\mathbb{R}$ containing a subset that is order-isomorphic to the set of integers, \citet{dubey2011} proves that there does not exist a SWF satisfying AN and WP.
Since AWP is a stronger Pareto condition than WP, the above result implies the impossibility of real-valued representation of SWOs satisfying AWP and AN.
Thus we are able to conclude that the domain restrictions on $Y$ for the existence of a SWF satisfying AN and AWP are identical to those for the existence of a SWF satisfying AN and WP.
We state it in the following corollary.}
\end{remark}

\begin{corollary}
\label{C1}
\emph{Let $Y\subset \mathbb{R}$ be non-empty, and let $X=Y^{\mathbb{N}}$.
There exists a social welfare function $W:X\rightarrow \mathbb{R}$ satisfying Almost Weak Pareto and Anonymity if and only if $Y$ does not contain a subset that is order-isomorphic to $\mathbb{Z}$.}
\end{corollary}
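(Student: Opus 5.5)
The plan is to prove the two directions separately, combining Lemma \ref{L1}, Lemma \ref{L9}, and the impossibility result of \citet{dubey2011} recalled in Remark \ref{R2}.

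\emph{Sufficiency.} Suppose $Y$ contains no subset order-isomorphic to $\mathbb{Z}$. By Lemma \ref{L1}, every non-empty subset of $Y$ has a first element or a last element. This is exactly the hypothesis of Lemma \ref{L9}. That lemma therefore supplies a (monotone, stationary) social welfare function $W:X\to\mathbb{R}$ satisfying AWP and AN. No further work is needed here. The only bookkeeping point is that Lemma \ref{L9} already covers the degenerate cases where one component in the decomposition $Y=Y_1+Y_2$ of Lemma \ref{L2} is empty, or where $Y$ is finite.

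\emph{Necessity.} Suppose $Y$ contains a subset $Z$ order-isomorphic to $\mathbb{Z}$, and assume for contradiction that some $W$ satisfies AWP and AN. Since $x_t>y_t$ for every $t$ implies $x\ge y$ together with strict improvement for all $t>\bar N$ (with, e.g., $\bar N=1$), AWP implies WP. Hence $W$ satisfies WP and AN on $Y^{\mathbb{N}}$. By the theorem of \citet{dubey2011} quoted in Remark \ref{R2}, no such function exists when $Y$ contains a copy of $\mathbb{Z}$. This is the required contradiction.

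The main obstacle is making the necessity direction self-contained if one does not want to rely on \citet{dubey2011} as a black box. In that case I would restrict to streams in $Z^{\mathbb{N}}\subseteq X$, which is legitimate because both axioms are inherited by restriction. I would then reproduce the standard argument: for each real $r$, use the $\mathbb{Z}$-indexed levels to build a pair of streams $x(r)<y(r)$ that differ coordinatewise strictly and are interleaved via finite permutations. The construction should be arranged so that the intervals $\big(W(x(r)),W(y(r))\big)$ are non-degenerate by WP and pairwise disjoint by AN combined with WP. This yields uncountably many disjoint open intervals in $\mathbb{R}$, which is impossible. Constructing these interleaved families correctly is the delicate step, but since the paper explicitly invokes \citet{dubey2011}, I would simply cite it.
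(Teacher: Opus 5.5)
Your proposal is correct and matches the paper's proof. Sufficiency uses Lemma~\ref{L1} to supply the hypothesis of Lemma~\ref{L9}, and necessity observes that AWP implies WP and invokes the impossibility result of \citet{dubey2011}; your optional self-contained sketch is unnecessary, since the paper also cites that result without reproving it.
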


\begin{proof}
Suppose first that $Y$ does not contain a subset that is order-isomorphic to $\mathbb{Z}$.
Then, by Lemma \ref{L1}, every non-empty subset of $Y$ has either a first element or a last element.
It follows from Lemma \ref{L9} that there exists a social welfare function on $X=Y^{\mathbb{N}}$ satisfying Almost Weak Pareto and Anonymity.

Conversely, suppose that there exists a social welfare function $W:X\rightarrow \mathbb{R}$ satisfying Almost Weak Pareto and Anonymity and assume, toward obtaining a contradiction, that $Y$ contains a subset that is order-isomorphic to $\mathbb{Z}$.
Since Almost Weak Pareto implies Weak Pareto, such a social welfare function would also satisfy Weak Pareto and Anonymity, which contradicts the characterization in \citet{dubey2011}, and we are done.
\end{proof}

In what follows, we use ``non-constructive'' in the sense of \citet{dubey2011a}: the existence of the social welfare order entails the existence of a non-Ramsey set, while a constructively defined order admits an explicit description without invoking such a device.

\begin{remark}
\label{R3}
\emph{For the domain $Y=[0, 1]$, \citet{svensson1980} has established that there exists a social welfare order satisfying anonymity and strong Pareto axioms.
Since the strong Pareto axiom\footnote{
A social welfare order $\succeq$ satisfies Strong Pareto if, for all $x,y\in X$, $x>y$ implies $x\succ y$.} is stronger than AWP, we could also infer that the same social welfare order would satisfy anonymity as well as AWP.
However, \citet{lauwers2010} showed that the existence of any social welfare order satisfying strong Pareto and anonymity entails the existence of a non-Ramsey set, which is known to be a non-constructive object.
Therefore, even though the social welfare order satisfying strong Pareto and anonymity axioms exists, it does not admit explicit description.
This result was extended to the social welfare order satisfying anonymity and weak Pareto when the domain $Y$ contains a subset which is order-isomorphic to the set of integers in \citet[Theorems 1 and 2]{dubey2011a}, which is precisely the domain on which there does not exist a social welfare function satisfying anonymity and Weak Pareto.
Corollary \ref{C1} implies the non-existence of a social welfare function satisfying anonymity and AWP on the identical domain.
Since AWP is a stronger Pareto condition than Weak Pareto, we are able to conclude that the social welfare order satisfying anonymity and AWP is a non-constructive object in the precise sense stated in the following corollary.}
\end{remark}

\begin{corollary}
\label{C2}
\emph{Let $Y\subset \mathbb{R}$ be non-empty and suppose that $Y$ contains a subset that is order-isomorphic to $\mathbb{Z}$, and let $X=Y^{\mathbb{N}}$.
Then the existence of any Almost Weak Paretian social welfare order $\succcurlyeq$ on $X$ satisfying Anonymity entails the existence of a non-Ramsey set; hence every such order is a non-constructive object.}
\end{corollary}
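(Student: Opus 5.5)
The plan is a reduction to the corresponding result for Weak Pareto in \citet[Theorems 1 and 2]{dubey2011a}. As Remark \ref{R3} recalls, that paper shows the following: if $Y$ contains a subset order-isomorphic to $\mathbb{Z}$, then the existence of any social welfare order on $Y^{\mathbb{N}}$ satisfying Anonymity and Weak Pareto entails the existence of a non-Ramsey set. Since Corollary \ref{C2} has exactly the same hypothesis on $Y$, the whole argument consists in showing that an AWP order is in particular a WP order.

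\textbf{Step 1.} Fix a social welfare order $\succcurlyeq$ on $X=Y^{\mathbb{N}}$ that satisfies AN and AWP in the preference formulation. Take $x,y\in X$ with $x_t>y_t$ for every $t\in\mathbb{N}$. Then $x\ge y$, and the AWP antecedent holds with $\bar N=1$, since $x_t>y_t$ for all $t>1$. So AWP gives $x\succ y$, and $\succcurlyeq$ satisfies WP. Finite anonymity is the same axiom in both settings, so nothing needs checking there.

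\textbf{Step 2.} Apply \citet[Theorems 1 and 2]{dubey2011a} to $\succcurlyeq$, which is now an anonymous, weakly Paretian order on a domain containing a copy of $\mathbb{Z}$. This yields a non-Ramsey set. By the convention adopted just before Remark \ref{R3}, any order whose existence entails a non-Ramsey set is non-constructive. Hence every anonymous AWP order on $X$ is a non-constructive object.

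\textbf{Main obstacle.} The only delicate point is making sure the cited theorem applies under exactly our hypotheses. Three things need checking.
\begin{enumerate}
\item The cited theorem should be stated for orders satisfying finite anonymity, not strong anonymity. Since we only assume the weaker axiom, the cited hypothesis must be no stronger than ours.
\item Its domain hypothesis should be the presence of an order-isomorphic copy of $\mathbb{Z}$ in $Y$, not some stronger density condition.
\item Its conclusion should concern the order itself, not a real-valued representation of it.
\end{enumerate}
If the cited result were stated only for a specific domain, I would first restrict the given order to $Z^{\mathbb{N}}$, where $Z\subseteq Y$ is a copy of $\mathbb{Z}$. The restriction still satisfies AN and WP. I would then transport it through the order isomorphism $Z\cong\mathbb{Z}$, which preserves both axioms coordinatewise, and apply the cited theorem there.
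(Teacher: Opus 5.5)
Your proposal is correct and is essentially the paper's proof: observe that AWP implies WP (any $\bar N$ works when $x_t>y_t$ for all $t$), then invoke \citet[Theorem 1]{dubey2011a} under the same hypothesis on $Y$ to obtain a non-Ramsey set. Your applicability checks and the fallback of restricting to $Z^{\mathbb{N}}$ are sound but not needed, since the paper applies the cited theorem directly.
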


\begin{proof}
Let $\succcurlyeq$ be a social welfare order on $X=Y^{\mathbb{N}}$ satisfying Almost Weak Pareto and Anonymity.
Since Almost Weak Pareto implies Weak Pareto, $\succcurlyeq$ also satisfies Weak Pareto and Anonymity.
Because $Y$ contains a subset that is order-isomorphic to $\mathbb Z$, \citet[Theorem 1]{dubey2011a} implies that the existence of such a social welfare order entails the existence of a non-Ramsey set.
Hence $\succcurlyeq$ is non-constructive in the stated sense.
\end{proof}
Combining Corollaries \ref{C1} and \ref{C2} leads to the following characterization result.

\begin{theorem}
\label{T2}
\emph{Let $Y\subset \mathbb{R}$ be non-empty, and let $X=Y^{\mathbb{N}}$.
There does not exist any SWF $W:X\rightarrow \mathbb{R}$ satisfying AN and AWP if and only if every SWO on $X$ satisfying AN and AWP is non-constructive.}
\end{theorem}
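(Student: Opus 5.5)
The plan is to prove the two implications separately, using Corollary \ref{C1} as the bridge between "no SWF exists" and the domain condition on $Y$. For the first implication, assume there is no SWF $W:X\to\mathbb{R}$ satisfying AN and AWP. By the "if" direction of Corollary \ref{C1} (equivalently, by Lemma \ref{L1} together with Lemma \ref{L9}), $Y$ must then contain a subset order-isomorphic to $\mathbb{Z}$. Corollary \ref{C2} now applies directly: every SWO on $X$ satisfying AN and AWP entails a non-Ramsey set, and is therefore non-constructive in the sense of \citet{dubey2011a}. If no such SWO exists at all, the conclusion holds vacuously, so this direction needs no further argument.

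For the converse I will argue by contraposition. Suppose some SWF satisfying AN and AWP exists. By the "only if" direction of Corollary \ref{C1}, $Y$ contains no subset order-isomorphic to $\mathbb{Z}$. Lemma \ref{L1} then says every non-empty subset of $Y$ has a first or a last element, so Lemma \ref{L9} yields the SWF $W$ built there. From it I define the SWO $x\succcurlyeq y \iff W(x)\ge W(y)$. This relation is complete and transitive because $\ge$ on $\mathbb{R}$ is. It inherits AN, since $W(x_\pi)=W(x)$ gives $x_\pi\sim x$. It inherits AWP, since $W(x)>W(y)$ gives $x\succ y$. It therefore remains to show that $\succcurlyeq$ is constructive, which contradicts "every SWO satisfying AN and AWP is non-constructive."

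The step I expect to need care is justifying that $\succcurlyeq$ admits an explicit description without a non-Ramsey set or other choice-dependent object. I would check this by tracing the construction. The Steckel decomposition of Lemma \ref{L2} can be written down explicitly: $Y_1$ is the set of $y$ such that $\{z\in Y: z\le y\}$ is well ordered, and $Y_2$ is its complement. The multiplicity profiles are defined by explicit counting. The order-separating family $\mathcal{D}$ consists of finite sequences over $\mathcal{A}$ (or their analogues in the appendix), which carry a canonical, explicitly definable enumeration. The function $V(p)=\sum_s 2^{-s}\mathbf{1}_{\{p\succeq d^s\}}$ is then an explicit formula. One point to verify is that the order isomorphisms used, such as $\sigma$ and $\tau$ or their transfinite analogues in Lemmas \ref{L5}, \ref{L6} and \ref{L9}, are definable. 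Since each component of $Y$ is well ordered or reverse well ordered, the ranking maps are uniquely determined and so require no choice. Once this is checked, the contrapositive is complete, and the two directions together give the equivalence stated in Theorem \ref{T2}.
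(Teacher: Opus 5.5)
Your proposal is correct and follows essentially the same route as the paper. The first direction goes through Corollary~\ref{C1} and Corollary~\ref{C2}, and the converse (a contrapositive in your version, a contradiction in the paper's, with the same content) goes through Lemma~\ref{L1}, Lemma~\ref{L9}, and the SWO induced by the explicitly defined $W$. One small correction to your constructiveness check: the appendix proofs of Lemmas~\ref{L5}, \ref{L6} and \ref{L9} use no transfinite order isomorphisms. The only ingredient that could hide a choice is the enumeration of $\widehat Y$ (and hence of the tests $\theta_i$). The paper makes this explicit through the fixed enumeration of $\mathbb{Q}\cap(0,1)$ and the index map $\iota$, and your canonical Steckel decomposition is exactly what lets that enumeration be defined without choices.
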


\begin{proof}
Necessity: Assume, first, that there does not exist any SWF $W:X\rightarrow \mathbb{R}$ satisfying Anonymity and Almost Weak Pareto.
Then, by Corollary \ref{C1}, $Y$ contains a subset that is order-isomorphic to $\mathbb{Z}$.
Therefore, it readily follows from Corollary \ref{C2} that every SWO on $X$ which satisfies AN and AWP is a non-constructive object.

Sufficiency: Suppose that every SWO on $X$ satisfying AN and AWP is non-constructive.
If $Y$ did not contain a subset that is order-isomorphic to $\mathbb{Z}$, then, by Lemma \ref{L1}, every non-empty subset of $Y$ would have either a first element or a last element.
Lemma \ref{L9} would then provide an explicitly defined SWF $W:X\rightarrow \mathbb{R}$ satisfying AN and AWP, constructed from countable cutoff profiles and a convergent series without invoking a non-Ramsey set.
The SWO defined on $X$ by $x\succeq y$ if and only if $W(x)\geq W(y)$ would therefore be a constructive SWO satisfying AN and AWP, which is a contradiction.
Hence $Y$ contains a subset that is order-isomorphic to $\mathbb{Z}$.
It follows from Corollary \ref{C1} that there does not exist any SWF $W:X\rightarrow \mathbb{R}$ satisfying AN and AWP, and the proof is finished.
\end{proof}

\section{Conclusion}

\label{sec:4}
In this paper we have studied the existence and construction of social welfare functions on infinite utility streams satisfying anonymity and Almost Weak Pareto.
We have characterized the restriction on the feasible set of one-period utilities $Y\subset\mathbb{R}$ under which such a social welfare function exists.
The necessary and sufficient condition is that $Y$ does not contain a subset that is order-isomorphic to the set of integers.
This condition is identical to the domain restriction obtained in \citet{dubey2011} for the existence of a SWF satisfying anonymity and Weak Pareto.
Thus, strengthening WP to AWP does not change the boundary between possibility and impossibility with respect to the domain $Y$.

The possibility part of our characterization is constructive.
The SWF we construct is monotone, invariant under arbitrary permutations, and satisfies stationarity in addition to AWP.
On the complementary class of domains, not only does a real-valued representation satisfying AN and AWP fail to exist, but every SWO satisfying these axioms is non-constructive in the sense that its existence entails the existence of a non-Ramsey set.
Thus, either there exists an explicitly defined social welfare function satisfying AN and AWP, or every social welfare order satisfying these axioms is non-constructive.

Viewed in relation to the existing literature, AWP lies between ADOP and WP but retains the same domain restriction as WP.
In contrast, a SWF satisfying anonymity and ADOP exists only when $Y$ is finite.
The tables below summarize these results and place them within the existing possibility and impossibility results.
It is also easy to verify that the SWF in \citet{petri2019} satisfies a stationarity axiom similar to the SWF in Theorem \ref{T1}.

\begin{table}[!ht]
\centering
\begin{threeparttable}
\caption{Non-existence of Ethical social welfare functions}
\label{tab1}

{\small
\setlength{\tabcolsep}{3pt}
\begin{tabular*}{\textwidth}{@{\extracolsep{\fill}} l c l @{}}
\toprule
Efficiency axiom & $Y$  &Source \\
\midrule
Pareto & $|Y|\geq 2$ &  \citet{basu2003} \\
Infinite Pareto & $|Y|\geq 2$ &  \citet{crespo2009} \\
Upper asymptotic Pareto & $|Y|\geq 2$ &  \citet{dubey2020} \\
Lower asymptotic Pareto & $|Y|=\infty$ &  \citet{petri2019} \\
Asymptotic density-one Pareto & $|Y|=\infty$ &  \citet{dubey2021} \\
Almost Weak Pareto & $Y^{\ast\ast}$ & Corollary \ref{C1} \\
Weak Pareto & $Y^{\ast\ast}$ &  \citet{dubey2011} \\
Uniform Pareto & $Y$ bounded &  \citet{sakai2012} \\
\bottomrule
\end{tabular*}}

\begin{tablenotes}[flushleft]
\footnotesize
\item \emph{Note:} $Y^{\ast}\subset\R$ does not contain a subset that is order-isomorphic to the set of integers; and $Y^{\ast\ast}\subset\R$ contains a subset that is order-isomorphic to the set of integers.
\end{tablenotes}

\end{threeparttable}
\end{table}

\begin{table}[!ht]
\centering
\begin{threeparttable}
\caption{Ethical social welfare functions exist}
\label{tab2}

{\small
\setlength{\tabcolsep}{3pt}
\begin{tabular*}{\textwidth}{@{\extracolsep{\fill}} l c l @{}}
\toprule
Efficiency axiom & $Y$  & Source \\
\midrule
Lower asymptotic Pareto & $|Y|<\infty$  & \citet{petri2019} \\
Asymptotic density-one Pareto & $|Y|<\infty$  & \citet{petri2019} \\
Almost Weak Pareto & $Y^{\ast}$  & Corollary \ref{C1} \\
Weak Pareto & $Y^{\ast}$  & \citet{dubey2011} \\
\bottomrule
\end{tabular*}
}

\begin{tablenotes}[flushleft]
\footnotesize
\item \emph{Note:} $Y^{\ast}\subset\R$ does not contain a subset that is order-isomorphic to the set of integers.
\end{tablenotes}

\end{threeparttable}
\end{table}

\clearpage
\appendix
\section{Proofs of the lemma \ref{L5}, \ref{L6}, and \ref{L9}}
\label{app}

Throughout these proofs, we first move the relevant subset of the real line into the open unit interval by means of the strictly increasing bijection
\begin{equation}
\label{E13}
\phi:\mathbb R\longrightarrow(0,1),
\qquad
\phi(a)=\frac12+\frac12\frac{a}{1+|a|}.
\end{equation}
For \(a<0\), one has \(\phi(a)=1/[2(1-a)]\), while for \(a\ge 0\), one has \(\phi(a)=1-1/[2(1+a)]\).  
Hence \(\phi\) is strictly increasing, with limits \(0\) and \(1\) at \(-\infty\) and \(+\infty\), respectively, and therefore is a bijection from \(\mathbb R\) onto \((0,1)\).  
For an infinite utility stream \(x=(x_1,x_2,\cdots)\), write
\begin{equation*}
\phi(x)=(\phi(x_1),\phi(x_2),\cdots).
\end{equation*}
We also fix, for the remainder of the proof, a one-to-one enumeration (without repetitions) of the rationals in the open unit interval,
\begin{equation}
\label{E14}
\mathbb Q\cap(0,1)=\{q_1,q_2,q_3,\cdots\}.
\end{equation}

\subsection{Proof of Lemma \ref{L5}}
\begin{proof}[Proof of Lemma \ref{L5}]
Let
\begin{equation*}
\widehat Y=\phi(Y)\subset(0,1).
\end{equation*}
Since \(\phi\) is strictly increasing, \(\widehat Y\) is well ordered under the usual $<$ order.
We first show, using only the fixed enumeration in \eqref{E14}, that \(\widehat Y\) is countable.

For every \(z\in \widehat{Y}\) that is not the last element of \(\widehat{Y}\), define
\[
z^+ := \min\{y\in \widehat{Y} : z<y\}.
\]
This is well-defined because the set \(\{y\in \widehat{Y} : z<y\}\) is nonempty whenever \(z\) is not the last element of \(\widehat{Y}\), and every nonempty subset of the well-ordered set \(\widehat{Y}\) has a first element.
Because \((z,z^+)\subset(0,1)\) is a non-empty open interval, it contains a rational number from \(\mathbb Q\cap(0,1)\).
Define
\begin{equation*}
\iota(z)=\min\{j\in\mathbb N:q_j\in(z,z^+)\},
\end{equation*}
and, for every \(m\in\mathbb N\), set
\begin{equation}
\label{E15}
G_m=\{z\in\widehat Y:z\text{ is not the last element and }\iota(z)=m\}.
\end{equation}
The open intervals \((z,z^+)\), as \(z\) ranges over the elements of \(\widehat Y\) that are not the last element, are pairwise disjoint.
Consequently, a fixed rational \(q_m\) can be selected for at most one such interval, and hence every \(G_m\) has at most one element.\footnote{For a small illustration, suppose \(\widehat Y=\{1/4,1/2,3/4\}\) and one fixed enumeration of \(\mathbb Q\cap(0,1)\) begins \(q_1=1/3\), \(q_2=2/3\), \(\cdots\).  The interval \((1/4,1/2)\) first meets the enumeration at \(q_1\), so \(G_1=\{1/4\}\); the interval \((1/2,3/4)\) first meets it at \(q_2\), so \(G_2=\{1/2\}\).  The point \(3/4\) is the last element and is added separately.}
Thus every element of \(\widehat Y\) that is not the last element belongs to exactly one of the sets \(G_m\), and there is at most one last element.  Therefore
\begin{equation*}
\widehat Y=\bigcup_{m=1}^{\infty}G_m
\end{equation*}
if \(\widehat Y\) has no last element, while otherwise \(\widehat Y\) is this union together with its single last element.  Hence \(\widehat Y\) is countable.

For every \(z\in\widehat Y\) and every \(u\in\widehat Y^{\mathbb N}\), define the cutoff count
\begin{equation*}
C_z(u)=\left|\{t\in\mathbb N:u_t<z\}\right|\in\mathcal A.
\end{equation*}
Since \(\widehat Y\) is countable and non-empty, fix a sequence \((z_1,z_2,\cdots)\) in which every element of \(\widehat Y\) occurs at least once; if \(\widehat Y\) is finite, repeat its elements periodically.
For \(u\in\widehat Y^{\mathbb N}\), define the cutoff-count profile
\begin{equation*}
c(u)=\big(C_{z_1}(u),C_{z_2}(u),\cdots\big)\in\mathcal A^{\mathbb N}.
\end{equation*}
On \(\mathcal A^{\mathbb N}\), define the lexicographic relation \(\succeq_{\widehat Y}\), with a smaller entry at the first difference being better.
Thus, for \(p\ne q\), if
\begin{equation*}
k=\min\{i\in\mathbb N:p_i\ne q_i\},
\end{equation*}
then
\begin{equation*}
p\succ_{\widehat Y}q
\quad\Longleftrightarrow\quad
p_k<q_k.
\end{equation*}
This relation is complete and transitive.

We next establish a real-valued representation of this profile order.
For every finite sequence \(a=(a_1,\cdots,a_k)\in\mathcal A^k\), define
\begin{equation*}
d(a)=(a_1,\cdots,a_k,\infty,\infty,\cdots),
\end{equation*}
and let
\begin{equation*}
\mathcal D_{\widehat Y}
=\{d(a):a\in\mathcal A^k\text{ for some }k\in\mathbb N\}.
\end{equation*}
Since \(\mathcal A\) is countable, \(\mathcal D_{\widehat Y}\) is countable.
Moreover, \(\mathcal D_{\widehat Y}\) is order-separating.
Indeed, if \(p\succ_{\widehat Y}q\) and \(k\) is their first coordinate of difference, take
\begin{equation*}
d=d(p_1,\cdots,p_k).
\end{equation*}
Then \(d\succ_{\widehat Y}q\), because \(d_k=p_k<q_k\) and all earlier coordinates agree.
Also \(p\succeq_{\widehat Y}d\): if \(p\ne d\), then at their first difference after coordinate \(k\), the coordinate of \(d\) is \(\infty\), which is the worst possible cutoff count.
Hence
\begin{equation*}
p\succeq_{\widehat Y}d\succ_{\widehat Y}q.
\end{equation*}

Enumerate \(\mathcal D_{\widehat Y}=\{d^1,d^2,\cdots\}\) and define
\begin{equation*}
V_{\widehat Y}(p)=\sum_{s=1}^{\infty}2^{-s}
\mathbf 1_{\{p\succeq_{\widehat Y}d^s\}},
\qquad p\in\mathcal A^{\mathbb N}.
\end{equation*}
As in Lemma \ref{L4}, order separation implies that, for all \(p,q\in\mathcal A^{\mathbb N}\),
\begin{equation*}
V_{\widehat Y}(p)\ge V_{\widehat Y}(q)
\quad\Longleftrightarrow\quad
p\succeq_{\widehat Y}q.
\end{equation*}
Define
\begin{equation*}
\widehat W(u)=V_{\widehat Y}(c(u)),
\qquad u\in\widehat Y^{\mathbb N}.
\end{equation*}

An arbitrary permutation of \(\mathbb N\) preserves every cutoff count \(C_z(u)\).
Hence \(\widehat W\) satisfies strong anonymity, and therefore AN.

For stationarity, let \(u_1=v_1=a\), and write
\begin{equation*}
u'=(u_2,u_3,\cdots),
\qquad
v'=(v_2,v_3,\cdots).
\end{equation*}
For a cutoff \(z\le a\), deleting the common first coordinate leaves both cutoff counts unchanged.
For a cutoff \(z>a\), the common first coordinate is counted on both sides; hence both finite cutoff counts are positive, and deleting that coordinate subtracts one from each of them, while leaving \(\infty\) unchanged.
The map that sends a positive finite integer \(n\) to \(n-1\) and sends \(\infty\) to \(\infty\) preserves equality and strict order.  Consequently, equality and strict inequality at every cutoff coordinate are preserved.
Thus
\begin{equation*}
c(u)\succeq_{\widehat Y}c(v)
\quad\Longleftrightarrow\quad
c(u')\succeq_{\widehat Y}c(v'),
\end{equation*}
and stationarity follows.

For monotonicity, suppose \(u\ge v\).
Then, for every \(z\in\widehat Y\),
\begin{equation*}
\{t:u_t<z\}\subseteq\{t:v_t<z\},
\end{equation*}
so
\begin{equation*}
C_z(u)\le C_z(v).
\end{equation*}
If the two cutoff-count profiles are equal, then \(\widehat W(u)=\widehat W(v)\).
Otherwise, at their first difference the cutoff count of \(u\) is smaller, so
\begin{equation*}
c(u)\succ_{\widehat Y}c(v)
\end{equation*}
and \(\widehat W(u)>\widehat W(v)\).
Hence \(\widehat W\) is monotone.

Finally, suppose \(u\ge v\) and there exists \(\bar N\in\mathbb N\) such that
\begin{equation*}
u_t>v_t\qquad\text{for every }t>\bar N.
\end{equation*}
Since \(\widehat Y\) is well ordered, the non-empty set
\begin{equation*}
\{u_t:t>\bar N\}
\end{equation*}
has a first element, say \(z^{\ast}\).
Choose \(t^{\ast}>\bar N\) with \(u_{t^{\ast}}=z^{\ast}\).
Then \(v_{t^{\ast}}<z^{\ast}\), while \(u_t\ge z^{\ast}\) for every \(t>\bar N\).
Hence \(C_{z^{\ast}}(u)\) is finite, and
\begin{equation*}
C_{z^{\ast}}(u)<C_{z^{\ast}}(v).
\end{equation*}
Together with \(C_z(u)\le C_z(v)\) for every \(z\in\widehat Y\), this implies that the first coordinate at which \(c(u)\) and \(c(v)\) differ is favorable to \(u\).
Therefore
\begin{equation*}
c(u)\succ_{\widehat Y}c(v)
\end{equation*}
and \(\widehat W(u)>\widehat W(v)\).
Thus \(\widehat W\) satisfies AWP.

We now transfer the construction back to the original set \(Y\).
For \(x\in Y^{\mathbb N}\), define
\begin{equation*}
W(x)=\widehat W(\phi(x)).
\end{equation*}
Because \(\phi\) is strictly increasing and is applied coordinate-wise, it commutes with every permutation of coordinates, preserves equality of the first coordinate, preserves coordinatewise weak inequality, and preserves strict tail improvement.
Consequently, the anonymity, stationarity, monotonicity, and AWP properties of \(\widehat W\) transfer directly to \(W\).
This proves the lemma.
\end{proof}

\subsection{Proof of Lemma \ref{L6}}
\begin{proof}[Proof of Lemma \ref{L6}]
First map \(Y\) into the open unit interval by \(\phi\), and write
\begin{equation*}
\widehat Y=\phi(Y)\subset(0,1).
\end{equation*}
Since \(Y\) is reverse well ordered and \(\phi\) is strictly increasing, \(\widehat Y\) is reverse well ordered.
Now reflect \(\widehat Y\) inside \((0,1)\):
\begin{equation*}
Y^{\dagger}=1-\widehat Y
=\{1-u:u\in\widehat Y\}\subset(0,1).
\end{equation*}
The set \(Y^{\dagger}\) is well ordered.
Define
\begin{equation*}
\tau:Y\longrightarrow Y^{\dagger},
\qquad
\tau(a)=1-\phi(a).
\end{equation*}
Then \(\tau\) is a bijection and reverses the order:
\begin{equation*}
a\ge b
\quad\Longleftrightarrow\quad
\tau(a)\le\tau(b).
\end{equation*}
Extend \(\tau\) coordinate-wise to streams in \(Y^{\mathbb N}\).

By Lemma \ref{L5}, applied to the well ordered subset \(Y^{\dagger}\subset(0,1)\), there exists a monotone social welfare function
\begin{equation*}
W^{\dagger}:(Y^{\dagger})^{\mathbb N}\to\mathbb R
\end{equation*}
satisfying Anonymity, Almost Weak Pareto, and stationarity.
Define
\begin{equation*}
W(x)=-W^{\dagger}(\tau(x)),
\qquad x\in Y^{\mathbb N}.
\end{equation*}

For every finite permutation \(\pi\) of \(\mathbb N\),
\begin{equation*}
\tau(x_{\pi})=\tau(x)_{\pi}.
\end{equation*}
Hence Anonymity of \(W^{\dagger}\) implies Anonymity of \(W\).

If \(x_1=y_1\), then \(\tau(x)_1=\tau(y)_1\) and
\begin{equation*}
\tau(x')=(\tau(x))',
\qquad
\tau(y')=(\tau(y))'.
\end{equation*}
By stationarity of \(W^{\dagger}\), applied after interchanging the two arguments, the equivalent ``less-than-or-equal-to'' formulation is
\begin{equation*}
W^{\dagger}(\tau(x))\le W^{\dagger}(\tau(y))
\quad\Longleftrightarrow\quad
W^{\dagger}(\tau(x'))\le W^{\dagger}(\tau(y')).
\end{equation*}
Multiplying by \(-1\) gives
\begin{equation*}
W(x)\ge W(y)
\quad\Longleftrightarrow\quad
W(x')\ge W(y'),
\end{equation*}
so \(W\) satisfies stationarity.

If \(x\ge y\), then order reversal gives
\begin{equation*}
\tau(y)\ge\tau(x).
\end{equation*}
By monotonicity of \(W^{\dagger}\),
\begin{equation*}
W^{\dagger}(\tau(y))\ge W^{\dagger}(\tau(x)),
\end{equation*}
and hence \(W(x)\ge W(y)\).
Thus \(W\) is monotone.

Finally, suppose \(x\ge y\) and \(x_t>y_t\) for every \(t>\bar N\), for some \(\bar N\in\mathbb N\).
Then
\begin{equation*}
\tau(y)\ge\tau(x)
\end{equation*}
and
\begin{equation*}
\tau(y)_t>\tau(x)_t
\qquad\text{for every }t>\bar N.
\end{equation*}
By AWP for \(W^{\dagger}\),
\begin{equation*}
W^{\dagger}(\tau(y))>W^{\dagger}(\tau(x)).
\end{equation*}
Multiplying by \(-1\) yields \(W(x)>W(y)\).
Hence \(W\) satisfies AWP.
\end{proof}

\subsection{Proof of Lemma \ref{L9}}
\begin{proof}[Proof of Lemma \ref{L9}]
Map the original set into the open unit interval and write
\begin{equation*}
\widehat Y=\phi(Y)\subset(0,1).
\end{equation*}
Because \(\phi\) is an order isomorphism, every non-empty subset of \(\widehat Y\) has either a first element or a last element.
By Lemma \ref{L2}, write
\begin{equation*}
\widehat Y=\widehat Y_1+\widehat Y_2,
\end{equation*}
where \(\widehat Y_1\) is well ordered (possibly empty) and \(\widehat Y_2\) is reverse well ordered (possibly empty), both under the usual order inherited from \((0,1)\).

The \(G_n\)-construction in the proof of Lemma \ref{L5}, based on the fixed enumeration
\begin{equation*}
\mathbb Q\cap(0,1)=\{q_1,q_2,\cdots\},
\end{equation*}
shows that \(\widehat Y_1\) is countable.
For \(\widehat Y_2\), the reflected set
\begin{equation*}
1-\widehat Y_2=\{1-u:u\in\widehat Y_2\}\subset(0,1)
\end{equation*}
is well ordered, so the same \(G_n\)-construction shows that it, and hence \(\widehat Y_2\), is countable.
Therefore \(\widehat Y\) is countable.

For each \(z\in\widehat Y\) and \(u\in\widehat Y^{\mathbb N}\), define the lower and upper cumulative cutoff counts
\begin{equation*}
L_z(u)=\left|\{t\in\mathbb N:u_t\le z\}\right|,
\qquad
U_z(u)=\left|\{t\in\mathbb N:u_t\ge z\}\right|.
\end{equation*}
Both counts take values in \(\mathcal A\).
Since \(\widehat Y\) is countable, introduce the countable non-empty set of tagged tests
\begin{equation*}
\mathcal T=(\{\mathrm L\}\times\widehat Y)
\cup(\{\mathrm U\}\times\widehat Y).
\end{equation*}
For \(\theta\in\mathcal T\), define
\begin{equation*}
T_{\theta}(u)=
\begin{cases}
L_z(u),&\theta=(\mathrm L,z),\\
U_z(u),&\theta=(\mathrm U,z).
\end{cases}
\end{equation*}
Fix a sequence \((\theta_1,\theta_2,\cdots)\) in which every element of \(\mathcal T\) occurs at least once; if \(\mathcal T\) is finite, repeat its elements periodically.
For \(u\in\widehat Y^{\mathbb N}\), define the cumulative profile
\begin{equation*}
r(u)=\big(T_{\theta_1}(u),T_{\theta_2}(u),\cdots\big)
\in\mathcal A^{\mathbb N}.
\end{equation*}
At a coordinate \(i\) with \(\theta_i=(\mathrm L,z)\), a smaller value is better; at a coordinate \(i\) with \(\theta_i=(\mathrm U,z)\), a larger value is better.
Accordingly, define a lexicographic relation \(\succeq_{\mathcal T}\) on \(\mathcal A^{\mathbb N}\) as follows.
For \(p\ne q\), let
\begin{equation*}
k=\min\{i\in\mathbb N:p_i\ne q_i\}.
\end{equation*}
Then \(p\succ_{\mathcal T}q\) if either \(\theta_k=(\mathrm L,z)\) for some \(z\in\widehat Y\) and \(p_k<q_k\), or \(\theta_k=(\mathrm U,z)\) for some \(z\in\widehat Y\) and \(p_k>q_k\).
This relation is complete and transitive: it is the usual lexicographic order after equipping each lower-cutoff coordinate with the reverse of the natural order on \(\mathcal A\) and each upper-cutoff coordinate with the natural order on \(\mathcal A\).

We next establish a real-valued representation of this profile order.
For a finite sequence \(a=(a_1,\cdots,a_k)\in\mathcal A^k\), define \(d(a)\in\mathcal A^{\mathbb N}\) by setting \(d_i(a)=a_i\) for \(i\le k\), and, for \(i>k\),
\begin{equation*}
d_i(a)=
\begin{cases}
\infty,&\theta_i=(\mathrm L,z)\text{ for some }z\in\widehat Y,\\
0,&\theta_i=(\mathrm U,z)\text{ for some }z\in\widehat Y.
\end{cases}
\end{equation*}
Let
\begin{equation*}
\mathcal D_{\mathcal T}
=\{d(a):a\in\mathcal A^k\text{ for some }k\in\mathbb N\}.
\end{equation*}
Since \(\mathcal A\) is countable, \(\mathcal D_{\mathcal T}\) is countable.
Moreover, it is order-separating.
Indeed, if \(p\succ_{\mathcal T}q\) and \(k\) is their first coordinate of difference, take
\begin{equation*}
d=d(p_1,\cdots,p_k).
\end{equation*}
Then \(d\succ_{\mathcal T}q\), because \(d_k=p_k\) and the comparison with \(q\) is decided at coordinate \(k\).
Also \(p\succeq_{\mathcal T}d\): after coordinate \(k\), the profile \(d\) has the worst possible value at every coordinate, namely \(\infty\) for a lower cutoff and \(0\) for an upper cutoff.
Thus
\begin{equation*}
p\succeq_{\mathcal T}d\succ_{\mathcal T}q.
\end{equation*}

Enumerate \(\mathcal D_{\mathcal T}=\{d^1,d^2,\cdots\}\) and define
\begin{equation*}
V_{\mathcal T}(p)=\sum_{s=1}^{\infty}2^{-s}
\mathbf 1_{\{p\succeq_{\mathcal T}d^s\}}.
\end{equation*}
The argument of Lemma \ref{L8} applies verbatim and yields
\begin{equation*}
V_{\mathcal T}(p)\ge V_{\mathcal T}(q)
\quad\Longleftrightarrow\quad
p\succeq_{\mathcal T}q.
\end{equation*}
Define
\begin{equation*}
\widehat W(u)=V_{\mathcal T}(r(u)),
\qquad u\in\widehat Y^{\mathbb N}.
\end{equation*}

An arbitrary permutation of \(\mathbb N\) preserves every lower and upper cumulative cutoff count.
Hence \(\widehat W\) satisfies strong anonymity, and therefore AN.

For stationarity, let \(u_1=v_1=a\), and write
\begin{equation*}
u'=(u_2,u_3,\cdots),
\qquad
v'=(v_2,v_3,\cdots).
\end{equation*}
For a lower cutoff \(L_z\), deleting the common first coordinate leaves both counts unchanged if \(a>z\).  If \(a\le z\), that common coordinate is counted on both sides, so both finite counts are positive; deletion subtracts one from each finite count and leaves \(\infty\) unchanged.
For an upper cutoff \(U_z\), deleting the common first coordinate leaves both counts unchanged if \(a<z\).  If \(a\ge z\), the common coordinate is again counted on both sides, so deletion subtracts one from each positive finite count and leaves \(\infty\) unchanged.
Since this operation preserves equality and strict order, equality and strict inequality at every test coordinate are preserved, and consequently
\begin{equation*}
r(u)\succeq_{\mathcal T}r(v)
\quad\Longleftrightarrow\quad
r(u')\succeq_{\mathcal T}r(v').
\end{equation*}
Therefore \(\widehat W\) satisfies stationarity.

For monotonicity, suppose \(u\ge v\).
Then, for every \(z\in\widehat Y\),
\begin{equation*}
\{t:u_t\le z\}\subseteq\{t:v_t\le z\}
\end{equation*}
and
\begin{equation*}
\{t:v_t\ge z\}\subseteq\{t:u_t\ge z\}.
\end{equation*}
Hence
\begin{equation*}
L_z(u)\le L_z(v)
\qquad\text{and}\qquad
U_z(u)\ge U_z(v).
\end{equation*}
Thus every test coordinate is weakly favorable to \(u\).
If the two cumulative profiles differ, their first difference is favorable to \(u\); otherwise they are equal.
Therefore
\begin{equation*}
r(u)\succeq_{\mathcal T}r(v)
\end{equation*}
and \(\widehat W(u)\ge\widehat W(v)\).
So \(\widehat W\) is monotone.

Finally, suppose \(u\ge v\) and \(u_t>v_t\) for every \(t>\bar N\), for some \(\bar N\in\mathbb N\).
We show that at least one test coordinate is strictly favorable to \(u\).

Suppose first that some tail value of \(u\) lies in \(\widehat Y_1\).
Since \(\widehat Y_1\) is well ordered, the non-empty set
\begin{equation*}
\{u_t:t>\bar N,\ u_t\in\widehat Y_1\}
\end{equation*}
has a first element, say \(a=u_{t^{\ast}}\) for some \(t^{\ast}>\bar N\).
Set \(z=v_{t^{\ast}}\).
Then \(z<a\), and necessarily \(z\in\widehat Y_1\).
By the choice of \(a\), every tail value of \(u\) is strictly greater than \(z\): tail values in \(\widehat Y_1\) are at least \(a\), and every element of \(\widehat Y_2\) lies above every element of \(\widehat Y_1\).
Thus \(L_z(u)\) is finite.
Since \(u\ge v\), one has \(L_z(u)\le L_z(v)\), and \(t^{\ast}\) is counted by \(L_z(v)\) but not by \(L_z(u)\).
Hence
\begin{equation*}
L_z(u)<L_z(v).
\end{equation*}

Suppose next that every tail value of \(u\) lies in \(\widehat Y_2\) and some tail value of \(v\) also lies in \(\widehat Y_2\).
Since \(\widehat Y_2\) is reverse well ordered, the non-empty set
\begin{equation*}
\{v_t:t>\bar N,\ v_t\in\widehat Y_2\}
\end{equation*}
has a last element, say \(b=v_{t^{\ast}}\) for some \(t^{\ast}>\bar N\).
Set \(z=u_{t^{\ast}}\).
Then \(z>b\), and \(z\in\widehat Y_2\).
By the choice of \(b\), every tail value of \(v\) is strictly below \(z\): tail values in \(\widehat Y_2\) are at most \(b\), and every element of \(\widehat Y_1\) lies below every element of \(\widehat Y_2\).
Thus \(U_z(v)\) is finite.
Since \(u\ge v\), one has \(U_z(u)\ge U_z(v)\), and \(t^{\ast}\) is counted by \(U_z(u)\) but not by \(U_z(v)\).
Hence
\begin{equation*}
U_z(u)>U_z(v).
\end{equation*}

Finally, suppose that every tail value of \(u\) lies in \(\widehat Y_2\) and every tail value of \(v\) lies in \(\widehat Y_1\).
Choose any \(t^{\ast}>\bar N\) and set \(z=u_{t^{\ast}}\in\widehat Y_2\).
Every tail value of \(v\) lies below \(z\), so \(U_z(v)\) is finite, whereas \(t^{\ast}\) is counted by \(U_z(u)\) but not by \(U_z(v)\).
Together with \(u\ge v\), this gives
\begin{equation*}
U_z(u)>U_z(v).
\end{equation*}

The three cases are exhaustive.
Hence at least one test coordinate is strictly favorable to \(u\), while all test coordinates are weakly favorable to \(u\) by monotonicity.
Therefore
\begin{equation*}
r(u)\succ_{\mathcal T}r(v),
\end{equation*}
and consequently \(\widehat W(u)>\widehat W(v)\).
Thus \(\widehat W\) satisfies AWP.

Finally, transfer the construction back to \(Y\) by defining
\begin{equation*}
W(x)=\widehat W(\phi(x)),
\qquad x\in Y^{\mathbb N}.
\end{equation*}
As in the proof of Lemma \ref{L5}, the strictly increasing coordinate-wise map \(\phi\) preserves permutations, equality of the first coordinate, coordinatewise weak inequality, and strict tail improvement.
Therefore \(W\) is monotone and satisfies Anonymity, Almost Weak Pareto, and stationarity.
\end{proof}

\clearpage

\bibliographystyle{plainnat}
\bibliography{AWPAnonymity}

\end{document}